\DeclareSymbolFont{rsfs}{U}{rsfs}{m}{n}
\DeclareSymbolFontAlphabet{\mathscrsfs}{rsfs}
\newtheorem{theorem}{Theorem}[section]
\newtheorem{lemma}[theorem]{Lemma}
\newtheorem{proposition}[theorem]{Proposition}
\newtheorem{corollary}[theorem]{Corollary}
\theoremstyle{definition}
\newtheorem{definition}{Definition}
\newtheorem{remark}[theorem]{Remark}
\numberwithin{equation}{section}
\def\iid{{\text{i.i.d.~}}}
\newcommand{\bea}{\begin{eqnarray}}
\newcommand{\eea}{\end{eqnarray}}
\newcommand{\<}{\langle}
\renewcommand{\>}{\rangle}
\newcommand{\wt}{\widetilde}
\newcommand{\wh}{\widehat}
\def\poly{{\rm poly}}
\def\eps{{\varepsilon}}
\def\supp{{\rm supp}}
\def\<{\langle}
\def\>{\rangle}
\def\Geom{{\rm Geom}}
\def\cM{{\cal M}}
\def\cN{{\cal N}}
\def\tot{{\rm tot}}
\def\b0{{\boldsymbol{0}}}
\def\Var{{\rm Var}}
\def\cA{{\mathcal A}}
\def\cI{{\mathcal I}}
\DeclareMathOperator*{\argmax}{argmax}
\renewcommand{\b}{\mathbf{b}}
\def\lt{\left}
\def\rt{\right}
\def\la{\langle}
\def\ra{\rangle}
\def\eps{\varepsilon}
\def\bbC{{\mathbb{C}}}
\def\bbE{{\mathbb{E}}}
\def\bbN{{\mathbb{N}}}
\def\bbP{{\mathbb{P}}}
\def\bbR{{\mathbb{R}}}
\def\bbZ{{\mathbb{Z}}}
\def\cN{{\mathcal{N}}}
\title{Tight Space Lower Bound for Pseudo-Deterministic Approximate Counting}
\author{Ofer Grossman \and Meghal Gupta \and Mark Sellke}
\date{}
\begin{document}

\maketitle

\begin{abstract}
We investigate one of the most basic problems in streaming algorithms: approximating the number of elements in the stream.
Famously, \cite{morris1978counting} gave a randomized algorithm achieving a constant-factor approximation error for streams of length at most $N$ in space $O(\log\log N)$. We investigate the pseudo-deterministic complexity of the problem and prove a tight $\Omega(\log{N})$ lower bound, thus resolving a problem of \cite{goldwasser2019pseudo}.
\end{abstract}

\section{Introduction}

The study of streaming algorithms originated with the seminal paper of Morris \cite{morris1978counting}, which
gave a low-memory randomized algorithm to approximately count a number of elements which arrive online in a stream.
Roughly speaking, the idea is to have a counter which approximates $\log n$, where $n$ is the number of elements seen in the stream so far. Each time the algorithm encounters an element from the stream, it increases the counter with probability about $\frac{1}{n}$.
As later proved by \cite{flajolet1985approximate}, Morris's algorithm achieves a constant-factor approximation error for streams of length at most $N$ in space $O(\log\log N)$.

Morris's algorithm has the property that running it multiple times on the same stream may result in different approximations. That is, if Alice runs the algorithm on the same stream as Bob (but using different randomness), Alice may get some approximation (such as $2^{30}$), and Bob (running the same algorithm but with independent randomness) may get a different approximation (such as $2^{29}$). 
Is this behavior inherent? That is, could there exist a low-space algorithm which, while being randomized, for all streams with high probability both Alice and Bob will end up with the \textit{same} approximation for the length?
Such algorithms, namely those which output the same output with high probability when run multiple times on the same input, are called \textit{pseudo-deterministic}. 



This question of the pseudo-deterministic space complexity of approximate counting in a stream was first posed in \cite{goldwasser2019pseudo}. 
Our main result fully resolves the problem by giving a tight $\Omega(\log N)$ lower bound.
In fact, our lower bound applies for an easier threshold version of the problem asking to distinguish between streams of length at most $N$ versus at least $M$ for any integers $N<M$. Moreover, it depends only on $N$, i.e. the problem is hard even when $M$ is arbitrarily larger than $N$.

\begin{theorem}[Informal]

For any $N<M$, a pseudo-deterministic streaming algorithm to distinguish between streams of length at most $N$ and at least $M$ must use $\Omega(\log N)$ space.
\end{theorem}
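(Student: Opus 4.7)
The plan is to view the algorithm's behavior, up to randomness, as a time-homogeneous Markov chain on a state space of size $K = 2^s$. Since every stream element plays the same role in a counting problem, the one-step transition matrix $P$ and the output function $\phi : [K] \to \{\text{small}, \text{large}\}$ are fixed, and the state distribution $\mu_t$ after reading $t$ elements evolves as $\mu_{t+1} = P \mu_t$. Write $S_0 = \phi^{-1}(\text{small})$ and $S_1 = \phi^{-1}(\text{large})$. Pseudo-determinism with error $\delta$ forces the dichotomy that, for every $t \geq 0$, either $\mu_t(S_0) \geq 1 - \delta$ or $\mu_t(S_1) \geq 1 - \delta$. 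Correctness then pins down the first case for $t \leq N$ and the second for $t \geq M$, so there is a flip time $t^{\star} \geq N$ with $\mu_{t^{\star}}(S_0) \geq 1-\delta$ and $\mu_{t^{\star}+1}(S_1) \geq 1-\delta$.

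The key structural idea is to exploit this sharp flip to build $t^{\star} + 1 \geq N+1$ pairwise disjoint ``level sets'' of states, which will witness $K \geq N+1$ and hence $s \geq \log_2(N+1)$. Concretely, I would set
\[
T^{(1)} = \{x \in S_0 : P(x, S_1) \geq 1/2\}, \qquad T^{(k)} = \Bigl\{x \in S_0 \setminus \bigcup_{j<k} T^{(j)} : P(x, T^{(k-1)}) \geq 1/2\Bigr\},
\]
and prove by backward induction on $k$ that $\mu_{t^{\star} - k + 1}$ places almost all of its mass on $T^{(k)}$. The base $k=1$ is immediate from $\mu_{t^{\star}+1}(S_1) \geq 1-\delta$ by a simple averaging argument: since transitions from outside $T^{(1)}$ contribute at most $1/2$ to $\mu_{t^{\star}+1}(S_1)$ per unit mass, the mass of $\mu_{t^{\star}}$ off $T^{(1)}$ must be $O(\delta)$. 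For the inductive step, the previous claim combined with pseudo-determinism at time $t^{\star}-k$ (which gives $\mu_{t^{\star}-k}(S_0) \geq 1-\delta$) forces the mass at that time onto states with high transition probability to $T^{(k)}$, which by construction is exactly $T^{(k+1)}$.

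The central technical obstacle is the careful control of error propagation along this induction. A naive analysis gives a recurrence of the form $\beta_{k+1} \leq 2\beta_k + O(\delta)$ for the defect $\beta_k = 1 - \mu_{t^{\star}-k+1}(T^{(k)})$, which only carries the induction to depth $O(\log(1/\delta))$, far short of $N$ under a constant error $\delta$. Direct amplification by parallel repetition is self-defeating here, since it inflates the space we are trying to bound. To push the induction all the way to depth $N$ I would instead make essential use of the fact that pseudo-determinism imposes a fresh concentration constraint at \emph{every} intermediate time step, not only at the flip. The right way to weave these per-step constraints together is likely through a carefully chosen potential function on distributions that is contractive under one step of $P$, or by sharpening the half-threshold in the definition of $T^{(k)}$ in a $\delta$-dependent way so that the recurrence becomes contractive rather than expansive. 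Getting this error control to yield the tight $\Omega(\log N)$ bound, rather than merely $\Omega(\log \log N)$, is where I expect the main technical difficulty to lie.
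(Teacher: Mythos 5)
Your proposal takes a genuinely different route from the paper, and it has a gap that is more fundamental than the error-propagation issue you flag.

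The paper does not try to directly lower-bound the number of states by exhibiting many disjoint sets. Instead it shows that any pseudo-deterministic chain must have a structured near-periodic decomposition: it locates a time $t$ around which the chain has short return times (a recurrence lemma), conditions on a random ``typical-cycle'' history, applies a central limit theorem to show the conditional law of $x_s$ over a $\mathrm{poly}(n)$-length window is well-approximated by an $m_v$-periodic sequence, divides out common prime factors via a global modulus $\beta$, and then uses Fourier analysis over the resulting periodic decomposition to produce a time $s$ with $\bbP[x_s\in U]\in(1/3,2/3)$, contradicting pseudo-determinism. The resulting bound is $n\geq N^{\Omega(1)}$, which suffices for $\Omega(\log N)$ space but is far weaker than the $K\geq N+1$ your plan would deliver if it worked.

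The deeper obstacle to your plan is not the $\beta_{k+1}\leq 2\beta_k+O(\delta)$ blowup per se, but the term you suppressed: the exclusion $S_0\setminus\bigcup_{j<k}T^{(j)}$ is essential for disjointness (hence for the state count), yet the recurrence for $\beta_{k+1}$ picks up an uncontrolled extra term $\mu_{t^\star-k}\bigl(\bigcup_{j\leq k}T^{(j)}\bigr)$. If the backward orbit revisits earlier regions of state space --- which is exactly what happens for chains with near-periodic behavior --- then $\mu_{t^\star-k+1}$ places its mass where earlier $T^{(j)}$'s already claimed states, and $T^{(k)}$ is starved. The fixes you gesture at (a contractive potential, a sharpened threshold) bear on the Markov-inequality loss, not on this revisiting term; even a perfectly contractive recurrence controlling the first two error sources leaves the third untouched. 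And you cannot hope to rule out revisiting by appealing to pseudo-determinism: the paper's proof establishes that pseudo-deterministic chains are \emph{forced} into near-periodic behavior, and derives the contradiction from the arithmetic incompatibility of the periods, not from the absence of cycling. So the ingredient you are missing is precisely what occupies Sections 3--4 of the paper, and I do not see a way to complete your plan without essentially reproducing that analysis.
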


\noindent Concurrently, \cite{braverman2023lower} showed a non-trivial $\Omega(\sqrt{\log n /\log \log n})$ space lower bound for this problem using different techniques.


\paragraph{Motivation for Pseudo-Determinism.} 
Randomization is a powerful tool in algorithm design.
For example, almost all classic streaming algorithms (such as for heavy hitters, approximate counting, $\ell_p$ approximation, finding a nonzero entry in a vector for turnstile algorithms, counting the number of distinct elements in a stream) cannot be derandomized without incurring a dramatic blowup in space complexity. 
However randomized algorithms also have several downsides compared to deterministic algorithms:
\begin{itemize}
\item They have a nonzero probability of error or failure.
\item Random bits must be sampled, which may be infeasible or computationally expensive.
\item The output is not predictable. That is, multiple executions of the algorithm on the same input may result in different outputs.
\end{itemize}

The first two points above have both significantly informed research on randomized algorithms. For example, the first point motivates the distinction between Monte Carlo and Las Vegas algorithms as well as RP versus BPP, and the need for improvements over naive amplification \cite{impagliazzo1989recycle,grossman2016amplification}. The second point has led to work on pseudorandom generators \cite{haastad1999pseudorandom}, how to recycle random bits \cite{impagliazzo1989recycle}, and efficient extractors \cite{trevisan2001extractors}, to mention a few. The study of \emph{pseudo-determinism} as introduced by Gat and Goldwasser \cite{GG} seeks to address the third problem. 
Our result shows that the third downside cannot be avoided by any algorithm that solves the approximate counting (or threshold) problem in low space.




\subsection{Related Work}

\paragraph{Prior Work on Approximate Counting in a Stream.} 
The study of streaming algorithms began with the work Morris \cite{morris1978counting} on approximate counting; a rigorous analysis was given later in \cite{flajolet1985approximate}. As explained earlier, the (randomized) Morris counter requires logarithmically fewer states than a (deterministic) exact counter.
The approximate counter has been useful as a theoretical primitive for other streaming algorithms \cite{ajtai2002approximate,gronemeier2009applying,kane2010exact,bhattacharyya2018optimal,jayaram2019towards} and its performance has been evaluated extensively
\cite{cvetkovski2007algorithm,csuros2010approximate,xu2021memory}.

The optimal dependence on the error level and probability in approximate counting was determined recently in \cite{nelson2022optimal}. Moreover, \cite{aden2022amortized} studied the amortized complexity of maintaining several approximate counters rather than just one.

\paragraph{Prior Work on Pseudo-Determinism.}
Pseudo-deterministic algorithms were introduced by Gat and Goldwasser \cite{GG} who were motivated by applications to cryptography. Since then, such algorithms have since been studied in the context of standard (sequential algorithms) \cite{roots, OS}, average case algorithms \cite{dhiraj}, parallel algorithms \cite{matching, ghosh2021matroid}, decision tree algorithms \cite{GGR, k-pseudodeterminism, dixon2021complete, goldwasser2021pseudo}, interactive proofs \cite{proofs, goemans2019doubly}, learning algorithms \cite{OS2}, approximation algorithms \cite{OS2, dixon}, low space algorithms \cite{reproducibility}, complexity theory \cite{lu2021pseudodeterministic, dixon2022pseudodeterminism} and more. 
In several of these settings, pseudo-deterministic algorithms were shown to outperform deterministic algorithms and sometimes even match the state of the art performance for randomized algorithms.

The paper \cite{goldwasser2019pseudo} initiated the study of pseudo-deterministic streaming algorithms. They examined various problems (such as finding heavy hitters, $\ell_2$ approximation, and finding a nonzero entry in a vector for turnstile algorithms) and asked whether approximate counting admits a pseudo-deterministic algorithm using $O(\log \log n)$ bits of space. We give an optimal $\Omega(\log n)$ lower bound.

As this work was being finalized, \cite{braverman2023lower} independently showed a $\Omega(\sqrt{\log n /\log \log n})$ lower bound.
Their proof proceeds via reduction to one-way communication complexity. Our proof, on the other hand, takes a completely different approach and analyzes the complexity directly by modeling the streaming algorithm as a Markov chain and showing it behaves as an ensemble of cyclic parts in a suitable sense.

\subsection{Main Result}

We consider the problem of pseudo-deterministic approximate counting in a stream. We first recall the definition of a pseudo-deterministic streaming algorithm:

\begin{definition} 
\label{def:PD-streaming}
We define a streaming algorithm $\cA$ to be \textit{pseudo-deterministic} if there exists some function $F$ such that for all valid input streams $x$,
\[
\bbP_r[\cA(x, r) = F(x)] \ge 2/3
\]
where $r$ is the randomness sampled and used by the algorithm $\cA$.
\end{definition}

Our main result gives a tight bound of $\Omega(\log N)$ bits of memory. 
In fact, we prove that pseudo-deterministically distinguishing between streams of length $T\leq N$ and $T\geq f(N)$ requires $\Omega(\log N)$ bits of memory for \emph{any} function $f:\bbN\to\bbN$ where $f(N) > N$.

\begin{definition}[Pseudo-Deterministic Approximate Threshold Problem]
\label{prob:main}
    Fix $N, M \in \bbN$. Suppose $\cA$ is a pseudo-deterministic streaming algorithm such that with $T$ the stream length:
    \begin{enumerate}
        \item $F(T)=1$ for $T\leq N$,
        \item $F(T)=0$ for $T\geq M$.
    \end{enumerate}
    Then we say $\cA$ solves the $(N,M)$-pseudo-deterministic approximate threshold problem. 
\end{definition}

\begin{theorem}
\label{thm:main}
    For any $f:\bbN\to\bbN$ where $f(N) > N$, any sequence of pseudo-deterministic algorithms $\cA_N$ which solve the $(N,f(N))$-approximate threshold problem requires $\Omega(\log N)$ bits of space.
\end{theorem}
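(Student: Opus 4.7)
The plan is to model $\cA_N$ as a time-homogeneous Markov chain on a state space $S$ of size $|S|=2^s$, with transition matrix $P$ and fixed initial state $\sinit$; after absorbing the output randomness into the state, the output is a deterministic function $g:S\to\{0,1\}$. Let $\mu_T=P^T\delta_{\sinit}\in\Delta(S)$ be the state distribution at time $T$, and set $\phi(T)=\sum_{s\in S}g(s)\,\mu_T(s)$, the probability of outputting $1$ on a length-$T$ stream. Pseudo-determinism together with the threshold conditions translates into
\[
\phi(T)\in[0,1/3]\cup[2/3,1]\text{ for every }T\ge 0,\qquad \phi|_{[0,N]}\ge 2/3,\qquad \phi|_{[f(N),\infty)}\le 1/3,
\]
and the target is to conclude $|S|\ge N^{\Omega(1)}$, equivalently $s\ge\Omega(\log N)$.

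First I would analyze $P$ through its cyclic structure: $S$ decomposes into transient states plus recurrent classes $C_1,\dots,C_k$ with periods $d_i\le|S|$, and each $C_i$ splits into $d_i$ cyclic subclasses through which the chain rotates deterministically as a set. Spectrally, the unit-modulus eigenvalues of $P$ are exactly the $d_i$-th roots of unity, giving a decomposition $\phi=\phi_{\mathrm{per}}+\phi_{\mathrm{dec}}$ where $\phi_{\mathrm{per}}$ is periodic of period dividing $D=\mathrm{lcm}(d_i)$ and $\phi_{\mathrm{dec}}(T)\to 0$. From $\phi\le 1/3$ on $[f(N),\infty)$ combined with $\phi_{\mathrm{dec}}\to 0$, letting $T\to\infty$ along each residue class mod $D$ yields $\phi_{\mathrm{per}}\le 1/3$ \emph{identically}; together with $\phi(N)\ge 2/3$ this forces $\phi_{\mathrm{dec}}(N)\ge 1/3$, so the transient has not died out by time $N$.

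The core of the argument is then a quantitative claim: a bounded linear-recurrent sequence of order $|S|$ that avoids the forbidden band $(1/3,2/3)$ at every $T\ge 0$, stays $\ge 2/3$ on $[0,N]$, and eventually drops $\le 1/3$ must satisfy $|S|\ge N^{\Omega(1)}$. I would prove this by exploiting the ``ensemble of cyclic parts'' picture: past the transient, $\mu_T$ is a convex combination of pure shifts on the cyclic subclasses, and $\phi(T)$ is the corresponding mixture. For such a mixture to move from $\ge 2/3$ to $\le 1/3$ \emph{without} ever entering the band $(1/3,2/3)$, the ``switching'' of mass from output-$1$ to output-$0$ subclasses must be essentially concentrated in time---any nontrivial spread in the switching time would place $\phi$ inside the band for some $T$. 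A near-deterministic switching at some $T^*\in[N,f(N))$ in turn requires the chain to effectively implement a length-$T^*$ counter, forcing $|S|\gtrsim N$ and thus $s\ge\Omega(\log N)$.

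The hard part will be making the informal ``sharp switching forces a counter'' reduction fully rigorous. The obstacle is ruling out exotic randomized constructions that combine slow-decaying transient modes with carefully tuned cyclic contributions so as to produce a seemingly sharp step in $\phi$ from few states. The key technical ingredient will be translating the band-avoidance constraint over the whole range $[0,f(N)]$ into a strong structural restriction on how the ensemble of cyclic subclasses can contribute to $\phi$---effectively reducing the pseudo-deterministic lower bound to its deterministic counterpart, where the $\Omega(N)$-state bound arises from a classical counter-style argument.
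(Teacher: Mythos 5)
Your high-level plan agrees with the paper up to a point: the reduction of a $b$-bit streaming algorithm to a Markov chain on $2^b$ states, the definition of $\phi(T)=\bbP[x_T\in U]$ with the constraint $\phi(T)\in[0,1/3]\cup[2/3,1]$, and the instinct that the obstruction should come from the chain's cyclic structure are all present in the paper. But the technical route you propose after that diverges, and the place you flag as ``the hard part'' is a genuine gap, not just a detail to fill in.

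The specific issue with the spectral decomposition $\phi=\phi_{\mathrm{per}}+\phi_{\mathrm{dec}}$ is that it says nothing useful on the time window $[0,N]$ where the action actually happens. Your argument correctly forces $\phi_{\mathrm{per}}\le 1/3$ identically and hence $\phi_{\mathrm{dec}}(N)\ge 1/3$, but $\phi_{\mathrm{dec}}$ is a sum of up to $n$ exponentially decaying (Jordan-block-corrected) modes whose decay rates can be $1-1/\poly(n)$, exactly as in the threshold Morris counter of Example~1 in the paper. Nothing in the spectral picture prevents $\phi_{\mathrm{dec}}$ from being a slowly varying curve, and the constraint you have to use is precisely that $\phi$ never enters the band $(1/3,2/3)$ --- yet you have no tool to convert ``many decaying modes'' plus band-avoidance into a state lower bound. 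The closing claim that near-deterministic switching at time $T^*\in[N,f(N))$ ``requires the chain to effectively implement a length-$T^*$ counter, forcing $|S|\gtrsim N$'' is exactly the statement that needs proof, and it is also stronger than what is true: the paper's bound is $n\ge N^{1/34}$, not $n\gtrsim N$, and no argument is given ruling out exotic mixtures of decaying and cyclic modes that produce a sharp step with $n\ll N$ states. You explicitly acknowledge this gap; it is the whole theorem.

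For comparison, the paper takes a different route that sidesteps the spectral decomposition entirely. Instead of decomposing $P$ by eigenvalues, it picks a \emph{random} time $t\le N/100$ and shows (Lemma~\ref{lem:find-t}) that with constant probability the chain is at a state $v$ whose return-time distribution $\mu_v$ has most mass on $[1,10n]$ and almost all mass on $[1,n^{18}]$; this neutralizes the slowly-decaying transient modes by averaging over $t$. It then singles out a ``special'' return length $m_v$, conditions on the sequence of non-special cycles, and shows via Berry--Esseen plus log-concavity (Lemma~\ref{lem:short-steps-distribution}) that the law of the chain's position at a much later time $s$ is $0.03$-covered by an $m_v$-interval-periodic distribution (Lemma~\ref{lem:short-steps-distribution-v2}). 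Finally, rather than a ``sharp switch forces a counter'' argument, it applies a discrete intermediate value theorem to pick a window where $F$ is roughly balanced, divides out a global period $\beta$ to make the reduced periods $M_v$ have no common prime with probability $>0.55$, and uses Fourier analysis on the resulting ensemble of periodic sequences (Lemma~\ref{lem:fourier}) to exhibit an $s$ with $\bbP[x_s\in U]\in[0.02,0.98]$, contradicting pseudo-determinism. The crux you are missing is both the averaging over a random start time to obtain recurrence on a $\poly(n)$ timescale, and the Fourier/number-theoretic argument that an ensemble of low-period cyclic behaviors cannot collectively stay near $\{0,1\}$ across a window much longer than any single period squared.
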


Since counting is harder than thresholding, the next corollary is an immediate consequence.

\begin{corollary}
\label{cor:main}
    Any pseudo-deterministic streaming algorithm which solves approximate counting up to a constant multiplicative factor for stream lengths at most $N$ requires $\Omega(\log N)$ bits of space.
\end{corollary}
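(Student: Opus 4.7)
The plan is a one-step reduction from approximate multiplicative counting to the approximate threshold problem of Definition~\ref{prob:main}, after which Theorem~\ref{thm:main} yields the $\Omega(\log N)$ lower bound. Let $c>1$ denote the constant multiplicative approximation factor. By Definition~\ref{def:PD-streaming} any pseudo-deterministic counting algorithm $\cA$ satisfying the hypothesis admits a canonical output function $F$ with $F(T) \in [T/c,\, cT]$ for every stream length $T \le N$.

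Given such an $\cA$ on parameter $N$, I would choose threshold parameters $N_0 := \lfloor N/(c^2+2)\rfloor$ and $M_0 := (c^2+1)\,N_0$, which satisfy $N_0 < M_0 \le N$ and $M_0 > c^2 N_0$. The derived threshold algorithm $\cA'$ simulates $\cA$ verbatim on the stream, and at termination outputs $1$ if $\cA$'s reported estimate is at most $cN_0$ and $0$ otherwise. For $T \le N_0$, the canonical estimate satisfies $F(T) \le cT \le cN_0$, so $\cA'$ canonically outputs $1$; for $M_0 \le T \le N$, the canonical estimate satisfies $F(T) \ge T/c \ge M_0/c = (c+1/c)\,N_0 > cN_0$, so $\cA'$ canonically outputs $0$. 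Because the thresholding step $x \mapsto \mathbf{1}\{x \le cN_0\}$ is a deterministic function of $\cA$'s output, whenever $\cA$ returns its canonical value — an event of probability at least $2/3$ — the derived algorithm $\cA'$ returns its canonical threshold value, so $\cA'$ is itself pseudo-deterministic with the same confidence parameter.

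The space overhead of the reduction is $O(1)$: during the stream $\cA'$ maintains only the state of $\cA$, and the single comparison against the fixed parameter $cN_0$ is performed once at termination (no additional streaming memory is needed to store $cN_0$, since it is a fixed function of the parameter $N$). Applying Theorem~\ref{thm:main} to the family $\{\cA'_{N_0}\}$ with the admissible function $f(N_0) := M_0 > N_0$ forces $\cA'_{N_0}$, and hence $\cA$, to use $\Omega(\log N_0) = \Omega(\log N)$ bits of space. There is no substantive obstacle here: the only point requiring care is verifying that deterministic post-processing preserves pseudo-determinism with the same confidence, which is immediate from Definition~\ref{def:PD-streaming}, and checking the elementary numerical inequality $(c+1/c) > c$ that separates the two canonical output regimes.
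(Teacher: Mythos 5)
Your reduction is correct and is exactly the argument the paper intends: the paper's entire proof of Corollary~\ref{cor:main} is the one-line remark that ``counting is harder than thresholding,'' and your construction (threshold the canonical estimate at $cN_0$ with $N_0=\lfloor N/(c^2+2)\rfloor$, $M_0=(c^2+1)N_0$, using that deterministic post-processing preserves pseudo-determinism) is the natural instantiation of that remark, worked out with correct arithmetic. The only caveat --- which the paper itself also glosses over --- is that Definition~\ref{prob:main} formally demands $F(T)=0$ for \emph{all} $T\ge M_0$ while the counting guarantee only covers $T\le N$; this is harmless because $M_0+\poly(n)\le N$ and the proof of Theorem~\ref{thm:main2} only ever examines times in that range, but it is a definitional mismatch rather than a flaw in your argument.
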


\begin{remark}
    Our result also implies a space lower bound for solving a variant of the heavy hitters problem pseudo-deterministically.
    Consider a version of this problem where given a $\{0,1\}$-valued stream, we aim to output a bit that appeared at least $10\%$ of the time.
    It follows from Corollary~\ref{cor:main} 
    that any pseudo-deterministic streaming algorithm solving this problem requires $\Omega(\log N)$ bits of space.
    Indeed, if it is public information that the first $9N$ bits are $0$ and the rest are $1$, then we are reduced to the $(N,81N)$-approximate threshold problem by counting the $1$'s. 
    Randomized algorithms again can solve the problem with $O(\log\log N)$ bits as it suffices to maintain a pair of Morris counters for the $0$'s and $1$'s separately.
    (Note that while heavy hitters is often considered to be solvable deterministically in constant space, such solutions typically assume a model where exact counting uses constant space.)
\end{remark}

\subsection{Markov Chain Formulation}
\label{subsec:markov-chain-setup}

It will be helpful to reframe the approximate threshold problem and pseudo-determinism in the language of Markov chains.
Note that any $b$-bit pseudo-deterministic streaming algorithm can be described as a Markov chain on $2^b$ states. 
Throughout the rest of the paper, we use this formulation instead, which we make precise below.

Let $\cM_n$ be an arbitrary Markov chain on state space $V=\{v_1,\dots,v_n\}$ with starting state $x_0=v_1$, and let $x_1,x_2,\dots$ be the random states at each subsequent time. 
Moreover, let $U\subseteq V$ be a distinguished subset of $V$.

\begin{definition}[{Markov Chain Solution to Pseudo-Deterministic Approximate Thresholding}]
~\\
We say $(\cM_n,U)$ is a $(N,f(N))$-solution to the approximate threshold problem if:
\begin{enumerate}[label=(\roman*)]
    \item $\bbP[x_t\in U]\geq 2/3$ for all $t\leq N$,
    \item $\bbP[x_t\in U]\leq 1/3$ for all $t\geq f(N)$.
\end{enumerate}
We say $\cM_n$ is pseudo-deterministic if $\bbP[x_t\in U]\in [0,1/3]\cup [2/3,1]$ for all $t$.
\end{definition}

We will show the following:

\begin{theorem}
\label{thm:main2}
    For any pseudo-deterministic $(\cM_n,U)$ that $(N,f(N))$-solves the approximate threshold problem for some $f(N) > N$, it holds that $n\geq N^{\Omega(1)}$.
\end{theorem}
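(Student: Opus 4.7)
The plan is to analyze $\cM_n$ through its cyclic decomposition and exploit the rigidity of pseudo-determinism.

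First, we decompose $\cM_n$ into transient states and recurrent classes $C_1, \ldots, C_k$ with periods $d_i$, each further broken into its cyclic sub-classes $C_{i,0}, \ldots, C_{i,d_i-1}$. Setting $D = \mathrm{lcm}(d_1, \ldots, d_k)$, standard ergodic theory implies that once $t$ exceeds the within-class mixing times, the distribution $\pi_t$ restricted to the recurrent portion is close to a $D$-periodic function $\pi_t^\infty$, and hence $p_t = \langle \pi_t, \mathbf{1}_U \rangle$ is close to a $D$-periodic limit $p_t^\infty$.

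Next, combining pseudo-determinism with $p_t \leq 1/3$ for $t \geq f(N)$, we force $p_t^\infty \leq 1/3$ for every $t$: if $p_r^\infty > 1/3$ at some residue $r \pmod D$, then for large $t \equiv r \pmod D$ the value $p_t$ is near $p_r^\infty > 1/3$, so pseudo-determinism forces $p_t \geq 2/3$ at arbitrarily large times, contradicting the threshold condition. Therefore the transient excess $p_t - p_t^\infty$ is at least $1/3$ throughout $t \leq N$, and the task reduces to bounding the lifetime of this excess by $\poly(n)$.

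To carry this out, we track the deviation of $\pi_t$ from its cyclic-stationary projection and argue that pseudo-determinism precludes slowly-decaying modes: any mode in $p_t - p_t^\infty$ that decays slower than $\poly(n)^{-1}$ per step would eventually place $p_t$ in the forbidden window $(1/3, 2/3)$. The goal is to show that $\cM_n$ behaves as an ``ensemble of cyclic parts'' with only polynomially many permissible phase configurations under pseudo-determinism, yielding $N \leq \poly(n)$.

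The hardest step, and main obstacle, is making this rigidity argument quantitative. A priori, $n$-state chains can have exponentially small spectral gaps and arbitrarily slow mixing, so a naive spectral-gap bound fails outright. The argument must instead directly exploit the forbidden-interval structure of $p_t$ — every transient mode competes against the pseudo-determinism gap, and slow modes cannot coexist with the constraint — and translate this into a combinatorial enumeration of the chain's allowed ``phases''. Producing this polynomial-in-$n$ count is where I expect the real technical work to lie, and it is what ultimately yields $n \geq N^{\Omega(1)}$.
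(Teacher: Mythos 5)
There is a genuine gap, and you have in fact put your finger on exactly where it is: the step ``bounding the lifetime of this excess by $\poly(n)$'' has no viable path in the framework you set up. Decomposing into recurrent classes $C_i$ with periods $d_i$ and waiting for $\pi_t$ to converge to a $D$-periodic limit $\pi_t^\infty$ cannot work, because the rate of convergence (the mixing time) can be exponential in $n$, and pseudo-determinism does not prevent this: the paper's Example~1 (a two-state chain with a $1/(3N)$ leak rate) has trivial recurrent structure ($D=1$, $\pi_t^\infty = \delta_{v_2}$) yet the transient excess persists for $\Theta(N)$ steps. In that example the chain fails pseudo-determinism because $\bbP[x_t\in U]$ decays gradually, but your proposed argument does not detect this: the ``modes'' in your decomposition are just the leak, which decays at rate exactly $1-1/(3N)$ per step, and you would need an extra ingredient to rule that out. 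Your proposed fix --- ``a combinatorial enumeration of the chain's allowed phases'' --- is a wish, not a mechanism; I don't see how it engages the problem, since the number of recurrent classes, their periods, and their phases can be essentially arbitrary, and the difficulty is not combinatorial but analytic (how fast the transient part dies).

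The paper's route is quite different and sidesteps the convergence issue entirely. It never decomposes into recurrent classes. Instead, it chooses a uniformly random moderate time $t$ and shows, by a pigeonhole argument over any single trajectory, that the state $x_t$ has a short return time with high probability (Lemma~\ref{lem:find-t}); crucially these return times are $\leq 10n$, not related to recurrent-class periods. Then it conditions on the typical-cycle sequence $\vec\ell$ and observes that the number of hidden ``special'' cycles is negative binomial, hence approximately Gaussian with standard deviation $\sim n^6$, so the conditional law of $x_s$ is close to $m_v$-periodic in $s$ on a window of length $\sim n^6$ (the covering Lemma~\ref{lem:short-steps-distribution-v2}). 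This deliberate blurring is the mechanism: it forces $\bbP[x_s\in U]$ to change slowly in $s$, even if the underlying chain has no spectral gap. After averaging over $x_t$, the function $s\mapsto\bbP[x_s\in U]$ is approximated by a mixture of periodic functions with periods $\leq 10n$, and a Fourier-analytic argument (Lemma~\ref{lem:fourier}, after factoring out a common modulus $\beta$) shows such a mixture cannot stay in $[0,1/3]\cup[2/3,1]$ while crossing from $1$ to $0$. Your sketch shares only the very high-level slogan of ``ensemble of cyclic parts''; the source of the cycles, the reason they matter on $\poly(n)$ timescales, and the mechanism that contradicts pseudo-determinism are all absent from the proposal.
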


It is easy to see that Theorem~\ref{thm:main2} implies Theorem~\ref{thm:main} by viewing a streaming algorithm using $b$ bits of memory as a Markov chain on $2^b$ states. 
Here $U$ is the set of algorithm states on which the algorithm outputs $1$.


\section{Technical Overview}

\subsection{Illustrative Examples}

We begin by describing two examples of Markov chains that fail to solve the pseudo-deterministic approximate threshold problem, but end up being surprisingly illustrative of the general case.

\paragraph{Example 1: Threshold Morris Counter.}
We first describe how to solve approximate thresholding in the usual randomized (non-pseudo-deterministic) setting, where $M=10N$ and for $t\in [N,10N]$, it is \emph{not} required that $\bbP[x_t\in U] \in [0,1/3] \cup [2/3,1]$.
We will use a version of a Morris counter with a simple two-state implementation (since we aim to threshold rather than count).

Our Markov chain will have states $(v_1,v_2)$ where $v_2$ is terminal, and the transition from $v_1$ to $v_2$ occurs with probability $\frac{1}{3N}$. Formally, $\bbP[x_{t+1}=v_2~|~x_t=v_2]=1$, and $\bbP[x_{t+1}=v_2~|~x_t=v_1]=\frac{1}{3N}$.
Setting $U=\{v_1\}$, this chain gives a randomized algorithm for $(N,10N)$ approximate thresholding. However, this example is not pseudo-deterministic because the function $t\mapsto\bbP[x_t\in U]$ decays gradually. 
In particular it is easy to see that $\bbP[x_{t+1}\in U]\geq 0.99\cdot \bbP[x_t\in U]$, so by the discrete-time intermediate value theorem there exists $t$ such that $\bbP[x_t\in U]\notin [0,1/3]\cup [2/3,1]$.

\paragraph{Example 2: Prime Cycles.}
In our second example, we discuss a futile attempt to solve a simpler version of the approximate threshold problem where it is given that the stream's length is at most $10M=100N$.
Informally speaking, on the first step the Markov chain picks a prime $p_i$. On all subsequent steps, it simply records the current time modulo $p_i$.
A main idea in our proof, outlined in the next subsection, will be to show that any Markov chain $\cM_n$ behaves like (a more complicated version of) this example.

Let $k=[n^{1/4},2n^{1/4}]$ and choose $k$ distinct primes $p_1,\dots,p_k\in [n^{1/3},2n^{1/3}]$.
Noting that $\sum_{i=1}^k p_i\leq n$, we can construct a Markov chain $\cM_n$ which contains an initial state $v_0$ as well as a deterministic $p_i$-cycle for each $1\leq i\leq k$. Here a $p$-cycle consists of vertices $(v^{(p)}_0,\dots,v^{(p)}_{p-1})$ with dynamics deterministically incrementing the subscript by $1$ modulo $p$ each time. 
At the first timestep, $\cM_n$ moves to $x_1=v^{(p_i)}_1$ for a uniformly random $p_i$.
One could try to solve the pseudo-deterministic approximate threshold problem (with restricted stream length at most $100N$) by choosing a special subset $U\subseteq \bigcup_{1\leq i\leq k} \{v^{(p_i)}_0,\dots,v^{(p_i)}_{p_i-1}\}$ of the state space of $\cM_n$.

We argue below that this is not possible, which will serve as a useful warmup for the main proof.
Note that by the discrete-time intermediate value theorem, for any such solution there must exist $0\leq T\leq 9N$ so that $\bbP[x_t\in U]\geq 2/3$ holds for roughly half of the values $t\in [T,T+1,\dots,T+N]$. 
To contradict pseudo-determinism, it suffices to show an upper bound of $o(1)$ on the variance of $\bbP[x_t\in U]$ over the range $t\in [T,T+1,\dots,T+N]$. To do this, it suffices to show that for some constant $C$,
\begin{equation}
\label{eq:sum-of-squares-want}
    \sum_{t\in [T,T+1,\dots,T+N]} (\bbP[x_t\in U]-C)^2\leq o(N).
\end{equation}
We will set $C$ to be the average value we would expect $\bbP[x_t\in U]$ to take on, which is $C=\bbE_{i\in [k]} F(p_i)$, where $F(p_i)$ is the fraction of the $\{v^{(p_i)}_0,\dots,v^{(p_i)}_{p_i-1}\}$ that are in $U$. Rewriting the summand on the left-hand side, we need to bound
\begin{align*}
    \sum_{t\in [T,\dots,T+N]} (\bbP[x_t\in U]-C)^2 = \sum_{t\in [T,\dots,T+N]} 
    \left( 
    \frac{1}{k}
    \sum_{1\leq i \leq k}\left[ \mathbbm{1}( v^{(p_i)}_{t}\in U) - F(p_i) \right] \right)^2.
\end{align*}
where the subscript of $v^{(p_i)}$ is viewed mod $p_i$ and $\mathbbm{1}$ represents the indicator function. Expanding this expression and switching the order of summation, it is equal to
\begin{align}
\notag
    &\frac{1}{k^2}
    \sum_{1\leq i \leq k} \left[ \sum_{t\in [T,\dots,T+N]} \left(\mathbbm{1}( v^{(p_i)}_{t}\in U) - F(p_i)\right)^2 \right] 
    \\
\label{eq:summand-above}
    & 
    +
    \frac{1}{k^2}
    \sum_{1\leq i,j \leq k} \left[ \sum_{t\in [T,\dots,T+N]} \left(\mathbbm{1}( v^{(p_i)}_{t}\in U) - F(p_i)\right)\cdot \left(\mathbbm{1}( v^{(p_j)}_{t}\in U) - F(p_j)\right) \right]
\end{align}
The first term is $O(N/k) = o(N)$, so we only need to bound the second term; we fix $i,j$ and uniformly estimate the inner expression by $o(N)$. For each prime $p$, the expression $\mathbbm{1}( v^{(p)}_{t}\in U) - F(p)$ averages to $0$ over any interval of length $p$. Thus by the Chinese remainder theorem, the summand in \eqref{eq:summand-above} has average value $0$ on every length $p_ip_j$ interval. After accounting for this cancellation, the number of remaining terms is at most $p_i p_j\leq O(n^{2/3})\leq o(N)$. Since the summands have absolute value at most $1$, combining yields the bound \eqref{eq:sum-of-squares-want}, thus contradicting pseudo-determinism.

\subsection{Proof Outline}

Suppose for sake of contradiction that the Markov chain $(\cM_n, U)$ is pseudo-deterministic and solves the approximate threshold problem.
The main idea of our proof is to extract from $\cM_n$ a subsystem behaving roughly like Example $2$ above, in the sense that it is an ensemble of cycles with different period lengths. 
We leverage this behavior around a time $T$ where $\bbP[x_s\in U]\geq 2/3$ holds for roughly half of the values $s\in [T,T+\poly(n)]$, which exists by the discrete-time intermediate value theorem.
Finally, we use Fourier analysis to prove there exists some $s\in [T,T+\poly(n)]$ where $\bbP[x_s\in U]$ is not too biased, contradicting the pseudo-determinism of $(\cM_n,U)$.

\paragraph{Recurrent Behavior on Moderate Time-Scales.}
The first step is to bypass the issue of permanent irreversible transitions as in Example $1$ by considering a random time. We choose a random time $1\leq t\leq \poly(n)$ and show that with high probability the chain \emph{behaves recurrently} at vertex $x_t$. 
More precisely, let $t+r$ be the next time at which $x_{t+r}=x_t$ (where $r=\infty$ if the chain never returns).
We show in Lemma~\ref{lem:find-t} that with high probability,
\begin{align}
\label{eq:short-cycles-overview}
    \bbP[r\leq 10n]&\geq 1/2,
    \\
\label{eq:long-cycles-overview}
    \bbP[r\leq n^{18}]&\geq 1-n^{-16}.
\end{align}

Using this guarantee, we can view the Markov chain from the perspective of the (random) state $x_t$ by considering the sequence of cycle lengths. The bound \eqref{eq:long-cycles-overview} allows us to define this process up to large $\poly(n)$ number of time steps, with all cycle lengths at most $n^{18}$ with high probability. This circumvents the behavior in Example $1$: although the chain might eventually transition to a terminal state and never return to $x_t$, it tends not to do so during a certain $\poly(n)$ time window.

\paragraph{Decomposition into Periodic Parts.}
Next, using \eqref{eq:short-cycles-overview}, it follows that there exists an integer $1\leq m_v\leq 10n$ such that conditioned on $x_t=v$, each cycle returning to $v$ has probability $\Omega(1/n)$ to have length exactly $m_v$.
We call cycles of length exactly $m_v$ \emph{special}, and
condition also on a \emph{masked} cycle length process $\vec\ell$. The process $\vec\ell$ hides the special cycles, but reveals all other cycle lengths in order, until stopping once a moderately large fixed $\poly(n)$ number of non-special cycles have been revealed.\footnote{Technically, we will assign length $m_v$ cycles to be non-special with some positive probability. This is necessary if the cycle length \emph{always} equals $m_v$, for example.}

As an explicit example, suppose that at vertex $v$, there is a $1/4$ probability of arriving back at $v$ after $3$ steps, $1/4$ probability of arriving back at $v$ after $5$ steps,  $1/3$ probability of arriving back at $v$ after $7$ steps, and $1/6$ probability of arriving back at $v$ after $11$ steps, and let $m_v = 7$. Suppose that on a specific run of the Markov chain starting at $v$, the sequence of cycles taken are of lengths $(3, 5, 7, 11, 7, 5, 3, 7)$. Then $\vec\ell = (3, 5, 11, 5, 3)$.

After this point, the number of hidden $m_v$ cycles is approximately Gaussian with $\poly(n)$ standard deviation.
In particular, its probability mass function is almost constant on short scales,
which lets us approximate the function
\[
    \bbP[x_s\in U~|~x_t=v]
\]
by an $m_v$-periodic function for large $s$. 
Averaging over the randomness of $x_t$, we have approximated $\bbP[x_s\in U]$ by an average of periodic functions with periods $m\leq 10n$.\footnote{We formalize this approximation using a notion of comparison we call $c$-covering, see Definition~\ref{def:cover}.}
This completes the structural phase of the proof. 
We now turn to finding a time $s$ such that $\bbP[x_s\in U]\in [1/3,2/3]$.

\paragraph{Analysis of Periodic Decomposition.}

The last step is to analyze this mixture-of-cyclic behavior at a time-region $T$ on which $\bbP[x_s\in U]\geq 2/3$ holds for roughly half of the values $s\in [T,T+\poly(n)]$; such $T$ exists by the discrete-time intermediate value theorem.
If the different periods $m_v$ were relatively prime, intuitively each pair of cyclic behaviors would be approximately independent over the range $[T,T+\poly(n)]$ since $m_v m_v'\leq 100n^2\ll \poly(n)$. This pairwise almost independence would allow us to upper bound $\sum_{s\in [T,T+\poly(n)]} (\bbP[x_s\in U]-1/2)^2$ similarly to Example $2$, thus contradicting pseudo-determinism.

To handle general period lengths, the key idea is to divide out common prime factors and restrict to a corresponding arithmetic progression. 
Precisely, for each prime $p$ we let $e_p\geq 0$ be the largest integer such that $m_v$ has at least some constant probability to be a multiple of $p^{e_p}$. 
Then it can be shown that $\beta=\prod_{p\text{ prime}} p^{e_p}\leq \poly(n)$, so we restrict attention to a fixed arithmetic progression with difference $\beta$.

Restricting to this arithmetic progression causes the different cycles to behave almost as if they are relatively prime. 
Precisely, the ``reduced period lengths'' $M_v=m_v/\gcd(m_v,\beta)$ have the property that any fixed prime $p$ has low probability of dividing $M_v$. 
This property does \emph{not} imply pairwise independence of different cycles: it could still be true that $\gcd(M_v,M_v')>1$ holds with high probability for independent $M_v,M_v'$.
However using Fourier analysis, we were still able to show that it implies a non-trivial upper bound on $\sum_{s\in [T,T+\poly(n)]} (\bbP[x_s\in U]-1/2)^2$. As mentioned above, such an upper bound contradicts pseudo-determinism which completes the proof.

\section{Proof of Theorem \ref{thm:main2}}
\label{sec:solution}

Assume $n\geq n_0$ is sufficiently large and let the threshold $N>n^{34}$. We will show such a Markov chain cannot $(N,M)$-solve the approximate counting problem for any $M$. We will replace the intervals $[0,1/3]$ and $[2/3,1]$ by $[0,10^{-4}]$ and $[1-10^{-4},1]$ in the statement of Theorem~\ref{thm:main2}, which is equivalent by amplification.

\subsection{Recurrent Behavior on Moderate Time-Scales}
\label{sec:cycles-def}

Given any $v\in V$, let $\mu_v$ be the \emph{return-time distribution} from $v$, i.e. the distribution for the smallest $r\geq 1$ such that $x_r=v$, starting from $x_0=v$. (We let $r=\infty$ if this never holds again.)

\begin{lemma}
    \label{lem:find-t}
    There exists $1\leq t\leq N/100$ and a subset $S\subseteq V$ such that:
    \begin{enumerate}[label=(\roman*)]
        \item $\bbP[x_t\in S]\geq 1/2$.
        \item\label{cond:return-prob} For all $v\in S$:
        \begin{align*}
            \bbP_{r \sim \mu_{v}}(r\in [0,10n])
            &\geq 
            1/2,
            \\
            \bbP_{r \sim \mu_{v}}(r\in [0,n^{18}])
            &\geq 
            1-n^{-16}.
        \end{align*}
    \end{enumerate}
\end{lemma}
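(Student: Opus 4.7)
The plan is to explicitly identify $S$ as the set of vertices satisfying both return-time conditions, then use a first-moment argument to show that with $T=N/100$, a uniformly random $t\in[1,T]$ lands in $S$ with probability at least $1/2$. Concretely, I would set
\[
B_1 = \{v \in V : \bbP_{r\sim \mu_v}(r \in [0,10n]) < 1/2\}, \qquad B_2 = \{v \in V : \bbP_{r\sim \mu_v}(r \in [0,n^{18}]) < 1 - n^{-16}\},
\]
and $S = V\setminus(B_1\cup B_2)$. Since $T\ge n^{33}$ by the hypothesis $N>n^{34}$, it suffices to prove $\bbE\!\left[\#\{t\leq T : x_t\in B_1\cup B_2\}\right] < T/2$, after which averaging over $t\in[1,T]$ produces the desired $t$.

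The main tool is a \emph{lonely visits} renewal bound, applied separately to $B_1$ and $B_2$. For $v\in B_1$, I would call a visit at time $t$ \emph{lonely} if no visit to $v$ occurs in $[t+1,t+10n]$; by the strong Markov property, conditional on $x_t=v$ each visit is lonely with probability at least $1/2$, so the expected number of lonely visits in $[1,T]$ is at least $\tfrac{1}{2}\bbE[X_v]$ where $X_v = \#\{t\leq T : x_t=v\}$. On the other hand, consecutive lonely visits must be separated by more than $10n$ steps, giving the deterministic bound $\#\text{lonely} \leq T/(10n)+1$. Combining and using $T\gg n$ yields $\bbE[X_v]\leq T/(4n)$, and summing over $|B_1|\leq n$ gives $\bbE\!\left[\#\{t\leq T : x_t\in B_1\}\right] \leq T/4$. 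The identical scheme applied to $B_2$ with window $n^{18}$ and probability $n^{-16}$ gives a per-vertex bound $\bbE[X_v]\leq n^{16}(T/n^{18}+1) = T/n^2 + n^{16}$; summing over $|B_2|\leq n$ and using $T\geq n^{33}$ yields $\bbE\!\left[\#\{t\leq T : x_t\in B_2\}\right] \leq T/n + n^{17} \leq T/100$. Together these give the required $< T/2$ estimate.

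The step requiring the most care is the lonely-visit renewal bound, which is the only place where the defining failure probability of each $B_i$ enters quantitatively. It uses only the strong Markov property at time $t$ to identify, conditional on $x_t=v$, the event ``no visit to $v$ in $[t+1,t+10n]$'' with the $\mu_v$-event $\{r > 10n\}$. Once this identification is made (with attention to the fact that $v\in B_1$ is equivalent to $\bbP_{\mu_v}(r>10n\text{ or }r=\infty)\geq 1/2$, and analogously for $B_2$), no substantive obstacle remains, and the rest is routine bookkeeping of constants using the hypothesis $N>n^{34}$.
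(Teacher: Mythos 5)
Your proposal is correct, and it takes a genuinely different route from the paper's proof. The paper argues \emph{pathwise over time}: it partitions $[1,N/100]$ into blocks of length $10n$ and observes that within any block, at most $n$ of the $10n$ starting times can fail to return within the block (pigeonhole on the at most $n$ states, each of which has a unique ``last visit'' in the block). Averaging this deterministic per-run bound over the randomness, and then applying Markov's inequality twice (once over $t$, once over $x_t$), produces $t$ and $S$. You instead argue \emph{per vertex}: you define the bad sets $B_1,B_2$ up front and bound $\bbE[X_v]$, the expected occupation time of each bad vertex, by a lonely-visit renewal argument — lonely visits to $v$ must be spaced more than $10n$ (resp.\ $n^{18}$) apart, yet conditional on $x_t=v$ each visit is lonely with probability at least $1/2$ (resp.\ $n^{-16}$) by the Markov property, forcing $\bbE[X_v]$ to be small. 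Summing over $|B_i|\leq n$ and using $T=N/100\geq n^{33}$ closes the argument. These are essentially dual pigeonholes (the paper pigeonholes states within a fixed time window; you pigeonhole times for a fixed state), and both rest on the same Markov-property identification of the conditional return law with $\mu_v$. A modest advantage of your version is that $S$ is defined explicitly and independently of $t$, which makes the logical structure a bit more transparent; the paper's version is more compact. Two cosmetic notes: the conditioning at a fixed time $t$ only needs the ordinary Markov property, not the strong one; and your characterization of $v\in B_1$ should read $\bbP_{\mu_v}(r>10n\text{ or }r=\infty)>1/2$ with a strict inequality, though $\geq 1/2$ suffices for the bound you derive.
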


\begin{proof}
    We first show that at least $0.85$-fraction of the values of $1\leq t\leq N/100$ satisfy: 
    \begin{equation}
    \label{eq:markov-short}
        \bbE_{x_t} \bbP_{r \sim \mu_{x_t}}(r\in [0,10n]) \geq 
            0.85
    \end{equation}
    Partition the interval $[1,N/100]$ into intervals $I_j=[1+(j-1)10n,10jn]$ of length $10n$ for $1\leq j\leq N/1000n$. Consider a fixed run of the Markov chain $x_1\ldots x_{N/100}$. For each interval $I_j$, at most $n$ values of $z_i$ for $i\in I_j$ will not return to themselves at some later time in $I_j$. Thus, in a fixed run of the Markov chain, the return time back to $x_t$ starting from time $t$ is at most $10n$ steps for at least $0.9$ fraction of times $1\leq t\leq N/100$. Averaging over all runs of the Markov chain, we get that 
    \[ 
        \bbE_{t\in [1,N/100]} \bbE_{x_t} \bbP_{r \sim \mu_{x_t}}(r\in [1,10n]) \geq 0.9
    \]
    and therefore at least $0.85$ fraction of $0<t<N/100$ satisfy \eqref{eq:markov-short}
    by Markov's inequality.
    By an identical argument, for at least $0.85$ fraction of $1\leq t\leq N/100$,
    \[ 
        \bbE_{x_t} \bbP_{r \sim \mu_{x_t}}(r\in [0,n^{18}]) \geq 
            1-10n^{-17}
    \]
    In particular, there exists a value of $t$ satisfying both conditions, which will be the value of $t$ in the lemma statement.
    
    Next applying Markov's inequality to  \eqref{eq:markov-short} shows for some $S_1\subseteq V$ with $\bbP[x_t\in S_1]\geq 0.6$,
    \[
        \bbP_{r \sim \mu_{v}}(r\in [1,10n])
        \geq 
        1/2.
    \]
    Similarly for some $S_2\subseteq V$ such that $\bbP[x_t\in S_2]>0.9$,
    \[
        \bbP_{r \sim \mu_{v}}(r\in [0,n^{18}])
        \geq 
        1-100n^{-17} \geq 1-n^{-16}.
    \]
    Taking $S=S_1 \cap S_2$ completes the proof.
\end{proof}

\subsection{Decomposition into Periodic Parts}

From now on we fix $t,S$ as in the previous subsection.
For each $v\in S$, define the period length 
\[
m_v \equiv \argmax_{1\leq m\leq 10n} \mu_v(m).
\]

Starting from $x_t=v$, we define a sequence of steps as follows. 
Let $(C_1,C_2,\dots)$ be the sequence of cycle lengths returning back to location $x_t$, i.e. $x_{t+C_1+\dots+C_j}$ for $j\geq 0$ are exactly the times $s\geq t$ with $x_s=x_t$. 
When $C_i=m_v$, we assign $C_i$ to be \emph{special} with probability 
\begin{equation}
\label{eq:special-prob}
    \bbP[C_i\text{ special}~|~C_i=m_v]
    =
    \frac{n^{-2}\mu_v(\{1,2,\dots,n^{18}\})}{\mu_v(m_v)}.    
\end{equation}
This choice of probability results in $\bbP[C_i\text{ special}~|~C_i\leq n^{18}]=n^{-2}$, which will be useful later. In all other cases where $C_i\leq n^{18}$, we assign $C_i$ to be \emph{typical}.
We define the \emph{special-steps sequence} $(s_1,s_2,\dots)=(m_v,m_v,\dots)$ to be the subsequence of special cycle lengths, and the \emph{typical-steps sequence} $(\ell_1,\ell_2,\dots)$ to consist of the typical cycle lengths.
Note that cycles of length greater than $n^{18}$ are neither special nor typical.

If $x_t=v$, let $\tau_v$ be the first time that $n^{14}$ typical steps have been completed. (We let $\tau_v$ be undefined or infinite for all other states $v$.)
Let $Y$ be the number of special cycles until time $\tau_{x_t}$, and let $s_{\tot}=Ym_v$ be the total duration of these special cycles.

For each $v\in S$, let $E_v$ be the event that $x_t=v$, and that all cycles until time $\tau_v$ are special or typical (in particular $\tau_v<\infty$ on this event). 

\begin{lemma}\label{lem:stop-infty}
    For any $v\in S$, we have $\bbP[E_v~|~x_t=v]\geq 0.99$.
\end{lemma}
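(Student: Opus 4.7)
The plan is to exploit the strong Markov property at the returns to $v$: conditional on $x_t=v$, the successive return-cycle lengths $C_1,C_2,\ldots$ are i.i.d.\ draws from $\mu_v$, and the independent special/typical/``bad'' ($C_i>n^{18}$) labels that we attach are therefore also i.i.d.\ across $i$. Lemma~\ref{lem:find-t}\,(ii) immediately gives $\bbP[\mathcal{B}_i]:=\bbP[C_i>n^{18}]\leq n^{-16}$ and $\mu_v(\{1,\ldots,n^{18}\})\geq 1-n^{-16}$, and from the definition \eqref{eq:special-prob} of the special probability one computes $\bbP[C_i\text{ typical}]=(1-n^{-2})\mu_v(\{1,\ldots,n^{18}\})\geq 1-2n^{-2}$ for $n$ large.

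Next, let $T$ be the number of cycles completed by time $\tau_v$, i.e.\ the waiting time for the $n^{14}$-th typical cycle. Since $T$ is the sum of $n^{14}$ i.i.d.\ geometric random variables with success probability $p=\bbP[C_i\text{ typical}]\geq 1-2n^{-2}$, we have $\bbE[T]\leq n^{14}/p\leq 2n^{14}$ (and $T<\infty$ almost surely). The complement of $E_v$, conditional on $x_t=v$, is exactly the event that some $C_i$ with $i\leq T$ is bad.

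The key step is a Wald-style bound. For each fixed $i$, the event $\{T\geq i\}$ is determined by $C_1,\ldots,C_{i-1}$ and hence independent of the label of $C_i$; therefore
\begin{align*}
\bbP[\exists\, i\leq T:\mathcal{B}_i]
&\leq \sum_{i\geq 1}\bbP[T\geq i,\,\mathcal{B}_i]
=\sum_{i\geq 1}\bbP[T\geq i]\,\bbP[\mathcal{B}_i]
\leq n^{-16}\,\bbE[T]
\leq 2n^{-14}.
\end{align*}
For $n\geq n_0$ sufficiently large this is $\leq 0.01$, yielding $\bbP[E_v\mid x_t=v]\geq 0.99$.

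I do not expect a serious obstacle here; the one thing to be careful about is the independence justification used in the union bound, since $T$ is a stopping time rather than a deterministic index. The standard argument is to note that $\{T\geq i\}=\{C_1,\ldots,C_{i-1}\text{ contain fewer than }n^{14}\text{ typicals}\}$ depends only on the first $i-1$ cycles, while the bad/typical/special label of $C_i$ is a function of $C_i$ alone, so independence follows from the i.i.d.\ structure of the $C_i$'s guaranteed by the strong Markov property at returns to $v$.
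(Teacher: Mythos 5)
Your proof is correct. It takes a genuinely different route from the paper's: the paper truncates the process at a fixed horizon of $n^{15}$ cycles, uses a union bound to argue that all of these cycles have length at most $n^{18}$ with probability $1-n^{-1}$, and then applies stochastic domination by a $\mathrm{Bin}(n^{15},1/2)$ random variable to argue that at least $n^{14}$ of them are typical with probability $1-n^{-1}$, giving $\bbP[E_v\mid x_t=v]\geq(1-n^{-1})^2\geq 0.99$. You instead work directly with the stopping time $T$ (the index of the $n^{14}$-th typical cycle) and apply a Wald-type bound, correctly justifying the key independence step: $\{T\geq i\}$ is measurable with respect to the first $i-1$ cycles and their labels, hence independent of the badness of $C_i$, so $\bbP[\exists\, i\leq T:\mathcal B_i]\leq \bbE[T]\cdot\sup_i\bbP[\mathcal B_i]$. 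Both arguments are valid; yours is arguably cleaner, avoids the ad hoc choice of the $n^{15}$ horizon, and yields the sharper bound $1-O(n^{-14})$ rather than $1-O(n^{-1})$ (though both exceed $0.99$ for large $n$). One cosmetic point: since $C_T$ is by definition typical, the complement of $E_v$ is really $\{\exists\, i< T:\mathcal B_i\}$; the bound you prove on $\{\exists\, i\leq T:\mathcal B_i\}$ is an overestimate and therefore still suffices.
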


\begin{proof}
    Note that each cycle length $C_i$ is independent and has probability at least $1-n^{-16}$ of being at most $n^{18}$ by Lemma~\ref{lem:find-t}.
    We find that with high probability the first $n^{15}$ cycle lengths are special or typical:
    \begin{equation}
    \label{eq:all-cycles-good}
    \bbP
    \lt[
    \max_{1\leq i\leq n^{15}} C_i\leq n^{17}
    \rt]
    \geq 1-n^{-1}.
    \end{equation}
    Conditioned on the event in \eqref{eq:all-cycles-good}, each $C_i$ is labeled typical independently with probability at least $1/2$.
    Hence the number of typical steps among the first $n^{15}$ cycles stochastically dominates a binomial $Bin(n^{15},1/2)$ variable, and in particular is at least $n^{14}$ with probability at least $1-n^{-1}$. 
    On this event, $\tau_v$ occurs during the first $n^{15}$ cycles and so $E_v$ holds with overall probability at least $(1-n^{-1})(1-n^{-1})\geq 0.99$.
\end{proof}

Next, we use our concept of special and typical cycles to describe the distribution of $\tau_v$, which we recall is the first time that $n^{14}$ typical steps have been completed conditioned on $x_t=v$.

\begin{definition}
\label{def:interval-periodic}
A probability distribution $\nu$ on $\bbZ$ is \textbf{$m$-interval-periodic} if it is supported on a discrete interval
$\{I,I+1,\dots,J\}$ and within this interval, $\nu(j)=\nu(j+m)$ depends only on $j\mod m$.
The \textbf{range} of $\nu$ is $J-I$.

Similarly a sequence $\vec\nu=(\nu_{I},\nu_{I+1},\dots, \nu_{J})$ of probability distributions on $V$ is $m$-interval-periodic if $\nu_j=\nu_{j+m}$ depends only on $j\mod m$. 
\end{definition}

\begin{definition}
\label{def:cover}
    Let $\mu,\nu$ be probability distributions on a countable set $\cI$ and $0<c<1$ be a constant. We say $\mu$ is a $c$-cover for $\nu$ if $\mu(i)\geq c\nu(i)$ for all $i\in \cI$.
\end{definition}

Let $\mu_{\tot,v,\vec\ell}$ be the distribution of $\tau_v$ conditionally on the event $E_v$ and the sequence $\vec\ell$ of $n^{14}$ typical steps.
In fact, it is easy to see from \eqref{eq:special-prob} that conditionally on $(v,E_v)$ the number $z_i$ of special steps between each adjacent pair $(\ell_{i-1},\ell_i)$ of typical steps is exactly the geometric distribution $\Geom(1-n^{-2})$. Thus $\bbP[z_i=j]=(1-n^{-2})n^{-2j}$ for each $j\in\bbZ_{\geq 0}$, and the mean and variance are known to be 
\begin{align*}
    \bbE[z_i]&=\frac{1}{n^2-1},
    \\
    \Var[z_i]&= \frac{n^2}{(n^2-1)^2}.
\end{align*}

\begin{lemma}\label{lem:short-steps-distribution}
    For each $v\in S$ and $\vec\ell$, there exists an $m_v$-interval-periodic distribution $\nu_{v,\vec\ell}$ with range $n^{6}<w_v<n^{8}$ such that the distribution $\mu_{tot,v,\vec\ell}$ is a $0.06$-cover of $\nu_{v,\vec\ell}$.
\end{lemma}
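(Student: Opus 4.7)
The plan is to identify the conditional law of $\tau_v$ as a shifted, scaled sum of i.i.d.\ geometrics, then construct $\nu_{v,\vec\ell}$ as the uniform distribution on an arithmetic progression of appropriate length. The $0.06$-cover property will follow from a pointwise bound on the negative binomial PMF via the local central limit theorem.

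Conditional on $E_v$ and $\vec\ell = (\ell_1,\dots,\ell_{n^{14}})$, the stopping time decomposes as $\tau_v = L + m_v Y$, where $L := \sum_{i=1}^{n^{14}} \ell_i$ is deterministic and $Y := \sum_{i=1}^{n^{14}} z_i$ is the sum of i.i.d.\ $\Geom(1-n^{-2})$ variables established earlier. Hence $\mu_{\tot,v,\vec\ell}$ is supported on the progression $\{L + m_v y : y \in \bbZ_{\ge 0}\}$ with mass $\bbP[Y = y]$ at $L + m_v y$; the stated moment formulas give $\mu_Y := \E[Y] = n^{14}/(n^2-1)$ and $\sigma_Y := \sqrt{\Var(Y)} = n^8/(n^2-1)$, so $n^6 \le \sigma_Y \le 2 n^6$ for $n \ge 2$.

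Set $r := \lceil \sigma_Y \rceil$ and let $\bar\mu_Y$ be the integer nearest $\mu_Y$. Define $\nu_{v,\vec\ell}$ to assign mass $1/(2r+1)$ to each of the $2r+1$ progression points $\{L + m_v y : y \in \bbZ,\ |y - \bar\mu_Y| \le r\}$, and zero mass to every other integer. By construction $\nu_{v,\vec\ell}$ is $m_v$-interval-periodic, supported in an interval of range $w_v := 2 m_v r$; using $1 \le m_v \le 10n$ and $n^6 \le r \le 2 n^6 + 1$, we get $w_v \ge 2 n^6 > n^6$ and $w_v \le 40 n^7 + 20 n < n^8$ for $n$ sufficiently large. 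It then remains to verify $(2r+1)\, \bbP[Y = y] \ge 0.06$ for every $y$ with $|y - \bar\mu_Y| \le r$. Since the $z_i$ are i.i.d., integer-valued, and aperiodic (because $\bbP[z_i = 0] > 0$), the local central limit theorem yields
\[
    \bbP[Y = y] = \frac{1 + o_n(1)}{\sigma_Y \sqrt{2\pi}}\, \exp\!\lt(- \frac{(y - \mu_Y)^2}{2 \sigma_Y^2}\rt)
\]
uniformly over $|y - \mu_Y| \le 3 \sigma_Y$. For $|y - \bar\mu_Y| \le r$ one has $|y - \mu_Y| \le r + 1 \le \sigma_Y + 2$, so the exponent is at most $1/2 + o_n(1)$; hence $\bbP[Y = y] \ge e^{-1/2}/(2 \sigma_Y \sqrt{2\pi}) \ge 0.1/\sigma_Y$ for $n$ large. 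Since $2 r + 1 \ge 2 \sigma_Y$, this gives $(2r+1)\, \bbP[Y = y] \ge 0.2 > 0.06$, as desired.

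The main delicate ingredient is this pointwise local CLT estimate; everything else is elementary bookkeeping. The LCLT is classical for i.i.d.\ integer-valued aperiodic summands with finite variance, and in any case $\bbP[Y = y]$ admits a closed-form negative binomial expression that can be expanded via Stirling's approximation to recover the same Gaussian estimate. Once this pointwise density bound is in hand, balancing the range constraint $w_v \in (n^6, n^8)$ against the cover constant $0.06$ is simply a matter of taking $r$ of the same order as $\sigma_Y$, which is exactly why we pick $r = \lceil \sigma_Y \rceil$.
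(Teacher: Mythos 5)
Your proof is correct, and it targets the same object as the paper's proof: the conditional law of $\tau_v$ is $L + m_v Y$ with $Y$ negative binomial, $\nu_{v,\vec\ell}$ is a uniform distribution on roughly $\sigma_Y$ consecutive progression points around $L + m_v \bbE[Y]$, and the cover constant comes from a pointwise lower bound $\bbP[Y=y]\gtrsim 1/\sigma_Y$ near the mean. The difference is how you obtain that pointwise bound. The paper applies the Berry--Esseen theorem to get macroscopic mass in the two intervals $[\mu_Z\pm\sigma_Z,\mu_Z\pm 2\sigma_Z]$, pigeonholes to find two individual points $a_1,a_2$ of mass $\geq 1/(8\sigma_Z)$, and then invokes log-concavity of the negative binomial PMF to interpolate the bound across $[\mu_Z-\sigma_Z,\mu_Z+\sigma_Z]$. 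You instead cite the local CLT directly. Both are valid; the paper's Berry--Esseen + log-concavity route has the advantage of staying with a ``soft'' CLT input and a one-line combinatorial observation, while the LCLT route is more direct but requires some care because the $z_i$ form a triangular array (their law depends on $n$), so the textbook LCLT for a fixed i.i.d.\ sequence does not literally apply. You acknowledge this and note the fallback of expanding the explicit negative binomial PMF via Stirling, which closes that gap, so the argument is complete. One other minor difference: the paper phrases the final step as ``$\mu_{\tot,v,\vec\ell}$ is the law of $m_v Z$,'' implicitly absorbing the deterministic shift by $L = \sum_i \ell_i$; you make this shift explicit, which is cleaner.
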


\begin{proof}
    Let $Z=\sum_{i=1}^{n^{14}} z_i$ be the total number of special cycles until time $\tau_v$.
    We first show that $Z$ obeys a central limit theorem as $n\to\infty$.
    Note that $w_i=z_i-\bbE[z_i]$ satisfies $\bbE[w_i]=0$, $\bbE[w_i^2]=\Theta(n^{-2})$ and $\bbE[|w_i|^3]=\Theta(n^{-2})$.
    Hence it follows from the Berry-Esseen theorem that with $\Phi(A)=\bbP^{y\sim \cN(0,1)}[y\leq A]$ the cumulative distribution function of the standard Gaussian, and $\mu_Z,\sigma_Z$ the mean and standard deviation of $Z$:
    \[
    \sup_{A\in\bbR}
    |\bbP\lt[\frac{Z-\mu_Z}{\sigma_Z}\leq A\rt]
    -\Phi(A)
    |
    \leq 
    \frac{Cn^{-2}}{n^{-3}\sqrt{n^{14}}}
    \leq O(1/n^2).
    \]
    In particular, for $n$ large enough,
    \begin{align*}
    \bbP[Z\in [\mu_Z-2\sigma_Z,\mu_Z-\sigma_Z]]
    &\geq \Phi(-1)-\Phi(-2)-0.001\geq  0.13,
    \\
    \bbP[Z\in [\mu_Z+\sigma_Z,\mu_Z+2\sigma_Z]]
    &\geq \Phi(2)-\Phi(1)-0.001 \geq 0.13.
\end{align*}
By averaging, there exist integers $a_1\in [\mu_Z-2\sigma_Z,\mu_Z-\sigma_Z]$ and $a_2\in [\mu_Z+\sigma_Z,\mu_Z+2\sigma_Z]$ with 
\[
    \bbP[Z=a_1],\bbP[Z=a_2]\geq \frac{1}{8\sigma_Z}.
\]
Next, we claim the probability mass function of $Z$ is log-concave; this implies that 
\begin{equation}
\label{eq:unimodal-trick}
    \bbP[Z=a]\geq \frac{1}{8\sigma_Z},\quad\forall a\in [\mu_Z-\sigma_Z,\mu_Z+\sigma_Z].
\end{equation}
Since $\mu_{\tot,v,\vec\ell}$ is exactly the law of $m_v Z$ and $\sigma_Z=n^{7}\sqrt{\Var[z_1]}=n^6(1\pm O(1/n))$, the statement \eqref{eq:unimodal-trick} readily implies the desired result. It remains only to prove that $Z$ has log-concave probability mass function.

For this, first note that $Z$ is a negative binomial distribution and its probability mass function takes the exact form
\[
    \bbP[Z=k]
    =
    \binom{k+n^{14}-1}{n^{14}-1} n^{-14k} (1-n^{-2})^{n^{14}}.
\]
(This description is well-known and follows from an elementary stars and bars argument.)
Hence it suffices to show that $k\mapsto \binom{k+n^{14}-1}{n^{14}-1}$ is log-concave. For this we note that $k\mapsto k+c$ is log-concave for each $c\geq 0$ on $k\in\bbZ_{\geq 0}$ and
\[
    \binom{k+n^{14}-1}{n^{14}-1}
    =
    \frac{(k+n^{14}-1)(k+n^{14}-2)\cdots(k+1)}{(n^{14}-1)!}
\]
is proportional to a product of such sequences.
\end{proof}

We remark that the conditioning on $\vec\ell$ played no role above as it just shifts $\mu_{tot,v,\vec\ell}$. However, in the next subsection it will be important to apply Lemma~\ref{lem:short-steps-distribution} after conditioning on $\vec\ell$. 

For each $v\in S$, typical-steps sequence $\vec\ell$, and time $s>t$, define the probability distribution
\[
\wt\mu_s=\wt\mu_s(v,\vec\ell)
\]
on $V$ to be the conditional law of $x_s$ given $(E_v,\vec \ell)$.

\begin{lemma}\label{lem:short-steps-distribution-v2}
    For each $v\in S$ and $\vec \ell$ and $T'>n^{33}$, there exists an $m_v$-interval-periodic sequence $\vec \omega=\vec\omega_{v}=(\omega_{T'},\ldots,\omega_{T'+n^{5}})$ of distributions that is $0.03$-covered by the sequence $\wt\mu_{T'}\ldots \wt\mu_{T'+n^{5}}$.
\end{lemma}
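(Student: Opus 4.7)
The plan is to leverage Lemma~\ref{lem:short-steps-distribution} and the strong Markov property at time $t+\tau_v$ to express $\wt\mu_s$ as a convex combination of iterated transition probabilities $P^{(\cdot)}(v,\cdot)$, and then define $\omega_s$ by restricting the combination to a sub--arithmetic-progression of indices that is common to every $s\in[T',T'+n^5]$ lying in the same residue class modulo $m_v$.

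First I would note that $\mu_{\tot,v,\vec\ell}$, being the law of $\tau_v=L+Ym_v$ with $L=\ell_1+\dots+\ell_{n^{14}}$ deterministic given $\vec\ell$, is supported on the single residue class $L\bmod m_v$. Consequently the $m_v$-interval-periodic cover $\nu=\nu_{v,\vec\ell}$ is uniform on $\{L+km_v:k\in[k_0,k_1]\}$ for some $k_0<k_1$ with $(k_1-k_0)m_v=w_v\in(n^6,n^8)$, taking the value $1/K$ on each lattice point, where $K:=k_1-k_0+1$. Since $T'>n^{33}$ dominates the support of $\nu$ (which lies in $[0,2n^{32}]$ relative to $t$, as $L\leq n^{14}\cdot n^{18}\leq n^{32}$ and $k_1m_v\leq(\mu_Z+\sigma_Z)m_v\lesssim n^{13}$), we have $s-t-j>0$ for every $s\in[T',T'+n^5]$ and $j$ in the support of $\nu$. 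Because $E_v$ and $\vec\ell$ are measurable with respect to the chain up to $\tau_v$, the Markov property at time $t+\tau_v$ yields $\P[x_s=u\mid E_v,\vec\ell,\tau_v=j]=P^{(s-t-j)}(v,u)$, and combining with $\P[\tau_v=j\mid E_v,\vec\ell]\geq 0.06\,\nu(j)$ gives
\[
\wt\mu_s(u)\;\geq\;\frac{0.06}{K}\sum_{k=k_0}^{k_1}P^{(s-t-L-km_v)}(v,u).
\]

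Next, for each residue $b\in\{0,\dots,m_v-1\}$ I would set
\[
W_b\;:=\;\bigcap_{\substack{s\in[T',T'+n^5]\\ s\equiv t+L+b\,(\mathrm{mod}\,m_v)}}\bigl\{\,s-t-L-km_v\,:\,k\in[k_0,k_1]\,\bigr\}.
\]
Each individual set in the intersection is an AP of $K$ terms with common difference $m_v$, all in residue class $b$; as $s$ varies in its residue class through $[T',T'+n^5]$, the AP shifts by at most $n^5$, so $W_b$ is a nonempty AP with $|W_b|\geq K-\lceil n^5/m_v\rceil\geq(1-n^5/w_v)K\geq(1-1/n)K\geq K/2$, using $w_v>n^6$. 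Now define
\[
\omega_s(u)\;:=\;\frac{1}{|W_b|}\sum_{r\in W_b}P^{(r)}(v,u),\qquad b=(s-t-L)\bmod m_v.
\]
This is a probability distribution on $V$ depending only on $s\bmod m_v$, so $\vec\omega=(\omega_{T'},\dots,\omega_{T'+n^5})$ is $m_v$-interval-periodic. The cover follows immediately: since $W_b$ is contained in the set appearing in the previous lower bound on $\wt\mu_s(u)$,
\[
\wt\mu_s(u)\;\geq\;\frac{0.06}{K}\sum_{r\in W_b}P^{(r)}(v,u)\;=\;\frac{0.06\,|W_b|}{K}\,\omega_s(u)\;\geq\;0.03\,\omega_s(u).
\]

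The main subtlety is that one wants $\omega_s$ to be simultaneously $m_v$-interval-periodic in $s$ and a bona fide probability distribution. A naive convolution $\omega_s(u)=\sum_j\nu(j)P^{(s-t-j)}(v,u)$ fails to be exactly periodic because of boundary contributions from the endpoints of the support of $\nu$; the intersection construction above sidesteps this by making $W_b$ depend only on the residue $b$, and the normalization $1/|W_b|$ preserves total mass $1$. The numerical hypothesis $w_v>n^6\gg n^5$ is precisely what allows $W_b$ to retain at least half of the size $K$ of each individual window, thereby converting the $0.06$-cover into a $0.03$-cover.
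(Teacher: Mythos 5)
Your proof is correct and follows essentially the same strategy as the paper: apply the Markov property at time $t+\tau_v$, lower-bound the weights on $P^{(\cdot)}(v,\cdot)$ by $0.06\,\nu_{v,\vec\ell}$, and restrict the range of indices to an intersection set common to every $s\in[T',T'+n^5]$ in a fixed residue class mod $m_v$, so that the resulting lower bound is exactly $m_v$-periodic while retaining most of its mass. Your observation that $\mu_{\tot,v,\vec\ell}$ (hence $\nu_{v,\vec\ell}$) is supported on a single residue class mod $m_v$, making $\nu_{v,\vec\ell}$ literally uniform on an arithmetic progression, is a nice concretization that the paper handles more abstractly via the periodic extension $\overline\nu_{v,\vec\ell}$, but it is the same construction.
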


\begin{proof}
    For any vertex $u$ and time $r$, define 
    $P^r(u)$ to be the distribution of the Markov chain state after $r$ steps starting from vertex $u$.
    Let 
    \[
    \supp(\nu_{v,\vec\ell})=
    \{I,I+1,\dots,J\}=
    \{
    I(\nu_{v,\vec\ell})
    ,I(\nu_{v,\vec\ell})+1,
    \dots,J(\nu_{v,\vec\ell})\}
    \]
    be the support of $\nu_{v,\vec\ell}$. Moreover let $\overline{\nu}_{v,\vec\ell}(0),\dots,\overline{\nu}_{v,\vec\ell}(m_v-1)$ be such that $\nu_{v,\vec\ell}(j)=\overline{\nu}_{v,\vec\ell}(j\mod m)$ for $I\leq j\leq J$.

    We explicitly compute $\wt\mu_s$ for $T'\leq s\leq T'+n^{5}$. With inequalities of positive measures indicating that the difference is a positive measure, we have
    \begin{align*}
        \wt\mu_s(v,\vec\ell)
        &\geq 
        \sum_{d=1}^{s}
        \bbP[\tau_v=d|(x_d=v,\vec\ell)]\cdot P^{s-d}(v)
        \\
        &\geq 
        0.06
        \sum_{d=1}^{s}
        \nu_{v,\vec\ell}(d)\cdot P^{s-d}(v)
        \\
        &=
        0.06\sum_{u=I-T'}^{J-T'-n^{5}}
        \nu_{v,\vec\ell}(s-u)\cdot P^{u}(v)
        \\
        &\geq
        0.06\lt(
        \frac{J-I-n^{5}-2m}{m}
        \rt) 
        \cdot
        \sum_{i=1}^m
        \overline{\nu}_{v,\vec\ell}(s-i\mod m) 
        \sum_{\substack{I-T'\leq u\leq J-T'-n^{5}\\ u\equiv i\mod m}}
        P^u(v).
    \end{align*}
    Note that this lower bound is exactly $m$-periodic for $s$ in the stated range. We now show it has a significant total mass. Note that 
    \begin{align*}
    &\lt(
    \frac{J-I+n^{5}+2m}{m}
    \rt) 
    \cdot
    \sum_{i=1}^m
    \overline{\nu}_{v,\vec\ell}(s-i\mod m) 
    \sum_{\substack{I-T'\leq u\leq J-T'-n^{5}\\ u\equiv i\mod m}}
    P^u(v)
    \\
    &\geq 
    \sum_{t=1}^{s}
    \nu_{v,\vec\ell}(t)\cdot P^{s-t}(v)
    \end{align*}
     has total mass at least $0.99$ since clearly $\bbP[\tau_v<s~|~E_v]\geq 0.99$ (by Markov's inequality on $s_{tot}$). 
    
    Since $J-I\geq n^{6}$ we have
    \[
    \frac{J-I-n^{5}-2m}{J-I+n^{5}+2m}\geq 0.99
    \]
    so we conclude that the lower bound above has a total mass of at least $0.03$.
\end{proof}

\subsection{Analysis of Periodic Decomposition}

We fix the values $m_v$ and distributions $\nu_{v,\vec\ell}$ from Lemma~\ref{lem:short-steps-distribution}. 
Next, we construct a ``global period length'' $\beta$ as follows. For each prime $p$, let $e_p$ be the largest value such that 
\[
\bbP[p^{e_p}\text{ divides }m_v~|~v\in S]\geq 0.55
\]
and define $\beta=\prod_p p^{e_p}$.

\begin{lemma}
    It holds that $\beta\leq n^{2}$.
\end{lemma}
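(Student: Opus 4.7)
The plan is to bound $\log\beta$ by computing $\bbE[\log m_v~|~x_t\in S]$ in two different ways. On one hand, $m_v\in\{1,\ldots,10n\}$ by the argmax in its definition, so $\log m_v\leq \log(10n)$ pointwise and hence $\bbE[\log m_v~|~x_t\in S]\leq \log(10n)$. On the other hand, the definition of $\beta$ via the thresholds $e_p$ should force $\bbE[\log m_v~|~x_t\in S]\geq 0.55\log\beta$. Once both inequalities are in hand, rearranging gives $\log\beta\leq \log(10n)/0.55$, i.e.\ $\beta\leq (10n)^{20/11}$, and since $20/11<2$ strictly, the multiplicative constant $10^{20/11}$ is absorbed for $n$ sufficiently large, yielding $\beta\leq n^{2}$.

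For the lower bound I decompose $\log m_v=\sum_p v_p(m_v)\log p$, where $v_p(m)$ denotes the exponent of the prime $p$ in the factorization of $m$, and apply the layer-cake identity
\[
\bbE[v_p(m_v)~|~x_t\in S]
= \sum_{k\geq 1}\bbP\bigl[\,p^k\text{ divides }m_v~|~x_t\in S\,\bigr].
\]
For each $1\leq k\leq e_p$ the event $\{p^{e_p}\text{ divides }m_v\}$ is contained in $\{p^k\text{ divides }m_v\}$, so by the defining property of $e_p$ each of these summands is at least $0.55$. Dropping all summands with $k>e_p$ gives $\bbE[v_p(m_v)~|~x_t\in S]\geq 0.55\, e_p$; summing over $p$ then produces the desired $\bbE[\log m_v~|~x_t\in S]\geq 0.55\log\beta$. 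Note the sum is finite because $e_p\geq 1$ requires $p\leq 10n$.

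There is no real obstacle here — the argument is a short first-moment computation. The only delicate point is that the constant $0.55$ in the definition of $e_p$ (as opposed to exactly $1/2$) is precisely what makes the resulting exponent $1/0.55=20/11$ strictly less than $2$, so that the additive $\log 10$ contribution can be absorbed into the polynomial bound $\beta\leq n^2$ for $n\geq n_0$.
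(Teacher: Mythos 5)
Your proof is correct and is essentially the same as the paper's: both bound $\bbE[\log m_v \mid v\in S]$ above by $\log(10n)$ (since $m_v\leq 10n$) and below by $0.55\log\beta$, then conclude $\beta\leq (10n)^{20/11}\leq n^2$ for $n$ large. The paper compresses the lower bound into the phrase "by Markov's inequality," which, applied per prime as $\bbE[v_p(m_v)\mid v\in S]\geq e_p\,\bbP[p^{e_p}\mid m_v \mid v\in S]\geq 0.55\,e_p$, is the same first-moment estimate you derive via the layer-cake identity.
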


\begin{proof}
    It suffices to observe that
    \[
        \log(10n)
        \geq
        \bbE[\log(m_v)~|~v\in S]
        \geq 
        \log(\beta^{0.55}).
    \]
    The former holds since $m_v\leq 10n$ while the latter holds by Markov's inequality. Hence $\beta\leq (10n)^{20/11}\leq n^2$.
\end{proof}

Let $M_v=m_v/\gcd(m_v,\beta)$. By definition, for any prime $p$ we have 
\begin{equation}
\label{eq:few-primes-p}
    \bbP[p\text{ divides }M_{x_t}~|~x_t\in S]\leq 0.55.
\end{equation}
Recalling Subsection~\ref{subsec:markov-chain-setup},  for each time $s$
let $F(s)=0$ if $\bbP[x_s\in U]\in [0,1/3]$ and $F(s)=1$ otherwise. Moreover define
\[
    \overline{F}_a=
    \frac{1}{\lfloor n^{5}/\beta \rfloor}
    \sum_{i=1}^{\lfloor n^{5}/\beta \rfloor}
    F((a+i)\beta)
    .
\]

\begin{lemma}
\label{lem:IVT}
    There exists $a\in \bbN$ such that $|\overline{F}_a-1/2|\leq 0.01$.
\end{lemma}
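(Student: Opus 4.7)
The plan is a straightforward discrete intermediate value theorem argument, relying on two facts: (i) $\overline{F}_a$ equals $1$ when $a$ is small and $0$ when $a$ is large, and (ii) $\overline{F}_{a+1}$ differs from $\overline{F}_a$ by at most $1/\lfloor n^{5}/\beta\rfloor$, which is much smaller than $0.01$.

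First, I would establish the boundary values. Since we have amplified the pseudo-determinism condition to say $\bbP[x_s\in U]\in [0,10^{-4}]\cup [1-10^{-4},1]$ for every $s$, the definition of $F$ (where $F(s)=1$ unless $\bbP[x_s\in U]\in[0,1/3]$) implies $F(s)=1$ for all $s\leq N$ and $F(s)=0$ for all $s\geq M$. For $a=1$, every index $(a+i)\beta$ appearing in the sum satisfies $(a+i)\beta\leq (1+\lfloor n^5/\beta\rfloor)\beta\leq \beta+n^5\leq 2n^5$, which is far below the threshold $N>n^{34}$. Hence every summand equals $1$ and $\overline{F}_1=1$. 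At the other extreme, for any $a\geq M/\beta$ every index $(a+i)\beta\geq M$, so every summand equals $0$ and $\overline{F}_a=0$.

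Next, I would bound the discrete increments of $a\mapsto \overline{F}_a$. Shifting the index $a$ by one removes the term $F((a+1)\beta)$ from the average and inserts the term $F((a+1+\lfloor n^5/\beta\rfloor)\beta)$. Since both removed and added terms lie in $\{0,1\}$, we get
\[
|\overline{F}_{a+1}-\overline{F}_a|\leq \frac{1}{\lfloor n^5/\beta\rfloor}\leq \frac{2\beta}{n^5}\leq \frac{2}{n^3},
\]
using $\beta\leq n^2$ from the previous subsection. For $n\geq n_0$ sufficiently large this is much less than $0.01$.

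Finally, I would apply the discrete intermediate value theorem to the sequence $\overline{F}_1,\overline{F}_2,\dots,\overline{F}_{\lceil M/\beta\rceil}$. Since this sequence starts at $1$, ends at $0$, and each consecutive pair differs by at most $2/n^3<0.01$, there must exist some $a$ in this range with $|\overline{F}_a-1/2|\leq 0.01$, as required. There is no real obstacle here; the only thing to check carefully is that the boundary values are honest ($N$ is genuinely much larger than the window length $n^5$, which is ensured by the assumption $N>n^{34}$), and that $\beta$ is small enough ($\beta\leq n^2$) for the step-size bound to beat the target $0.01$.
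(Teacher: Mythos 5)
Your proof is correct and follows essentially the same approach as the paper's own (very terse) proof: check the boundary values $\overline{F}_a=1$ for small $a$ and $\overline{F}_a=0$ for $a\geq M/\beta$, bound the increment $|\overline{F}_{a+1}-\overline{F}_a|\leq 1/\lfloor n^5/\beta\rfloor$ using $\beta\leq n^2$, and apply the discrete intermediate value theorem. The only difference is that you spell out the details the paper leaves to the reader (the explicit step-size bound $2/n^3$, the index arithmetic showing the boundary terms lie on the correct side of $N$ and $M$), which is fine and adds clarity.
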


\begin{proof}
    This follows by the discrete-time intermediate value theorem. Indeed since $N\geq n^{5}$ we have $\overline{F}_0=1$, while $\overline{F}_a=0$ for $a\geq f(N)$. Clearly $|\overline{F}_{a+1}-\overline{F}_a|\leq 0.01$ for all $a$.
\end{proof}

Throughout the rest of the proof, let $T=a\beta$ for $a$ as in Lemma~\ref{lem:IVT}. We note that $T>n^{33}$ and therefore satisfies the requirements for $T'$ in Lemma~\ref{lem:short-steps-distribution-v2}.
For each time $T\leq s\leq T+n^{5}$, let 
\[
    \overline{\omega}_s=
    \bbE^{x_t,\vec\ell}
    \lt[
    \bbP^{v\sim \omega_{x_t,\ell,s}}
    [v\in U~]~|~x_t\in S \text{ and } E_{x_t}
    \rt].
\]
(Note that $\{x_t\in S \text{ and } E_{x_t}\}$ is just an event.)

\begin{lemma}
\label{lem:omega-biased}
    $\overline{\omega}_s\geq 0.99$ if $F(s)=1$ and 
    $\overline{\omega}_s\leq 0.01$ if $F(s)=0$.
\end{lemma}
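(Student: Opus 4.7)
The plan is to combine the amplified pseudo-determinism dichotomy with the covering relationship from Lemma~\ref{lem:short-steps-distribution-v2}. By the amplification reduction stated at the start of Section~\ref{sec:solution}, pseudo-determinism gives $\bbP[x_s\in U]\in [0,10^{-4}]\cup [1-10^{-4},1]$ for every $s$, so $F(s)=0$ means $\bbP[x_s\in U]\leq 10^{-4}$ and $F(s)=1$ means $\bbP[x_s\in U]\geq 1-10^{-4}$. Moreover, Lemma~\ref{lem:find-t}(i) together with Lemma~\ref{lem:stop-infty} yields $\bbP[x_t\in S, E_{x_t}]\geq \tfrac12 \cdot 0.99 \geq 0.49$.

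Unpacking definitions, $\wt\mu_s(x_t,\vec\ell)$ is precisely the conditional law of $x_s$ given $(x_t, \vec\ell, E_{x_t})$, so by the tower property
\[
\bbP[x_s\in U \mid x_t\in S, E_{x_t}] \;=\; \bbE\!\lt[\bbP^{w\sim \wt\mu_s(x_t,\vec\ell)}[w\in U] \;\Big|\; x_t\in S, E_{x_t}\rt].
\]
Moreover, Lemma~\ref{lem:short-steps-distribution-v2} says $\wt\mu_s \geq 0.03\,\omega_{x_t,\vec\ell,s}$ as positive measures on $V$, which gives the pointwise bound $\bbP^{w\sim \wt\mu_s}[w\in A] \geq 0.03\,\bbP^{w\sim\omega_{x_t,\vec\ell,s}}[w\in A]$ for \emph{any} subset $A\subseteq V$; I will apply this with $A=U$ and with $A=V\setminus U$.

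If $F(s)=0$, then $\bbP[x_s\in U]\leq 10^{-4}$, so the conditional probability above is at most $10^{-4}/0.49$. Taking $A=U$ in the cover inequality and averaging over $(x_t,\vec\ell)$ conditional on $\{x_t\in S, E_{x_t}\}$ gives $0.03\,\overline{\omega}_s \leq 10^{-4}/0.49$, hence $\overline{\omega}_s \leq 10^{-4}/(0.03\cdot 0.49) < 0.01$. The case $F(s)=1$ is symmetric with $A=V\setminus U$: from $\bbP[x_s\notin U]\leq 10^{-4}$ one obtains $0.03\,(1-\overline{\omega}_s)\leq 10^{-4}/0.49$, i.e.\ $\overline{\omega}_s\geq 0.99$.

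There is essentially no obstacle here; the lemma is bookkeeping that glues the previous structural results to the pseudo-determinism hypothesis. The one point requiring care is that the conditioning event $\{x_t\in S, E_{x_t}\}$ has probability bounded away from $0$, which is exactly what Lemmas~\ref{lem:find-t} and \ref{lem:stop-infty} deliver: the amplified threshold $10^{-4}$ comfortably absorbs both the factor $1/0.49$ from conditioning and the factor $1/0.03$ from the cover, leaving a clear margin below $0.01$.
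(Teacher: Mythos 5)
Your proof is correct and takes essentially the same route as the paper: bound $\bbP[x_t\in S, E_{x_t}]$ from below using Lemmas~\ref{lem:find-t} and \ref{lem:stop-infty}, apply Bayes' rule to control the conditional probability of $\{x_s\in U\}$ by $\bbP[x_s\in U]$, and then use the $0.03$-cover from Lemma~\ref{lem:short-steps-distribution-v2} pointwise (with $A=U$ or $A=V\setminus U$) to transfer this to $\overline{\omega}_s$. Your version is if anything slightly more explicit than the paper's, identifying $\wt\mu_s(x_t,\vec\ell)$ as the correct conditional law to apply the cover to and using the sharper bound $\bbP[x_t\in S, E_{x_t}]\geq 0.49$ rather than the paper's looser $1/3$.
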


\begin{proof}
    We focus on the latter statement since they are symmetric.

    Given $x_t$, generate $w\sim P^{s-t}(x_t)$ to be the state of the Markov chain at time $s$ drawn from the original trajectory distribution.
    By Bayes' rule for expectations, since the event $\{x_t\in S\text{ and }E_{x_t}\}$ has probability at least $1/3$, conditioning on it increases the expectation of any non-negative random variable by a factor at most $3$. 
    In particular:
    \begin{align*}
    \bbE^{x_t,\vec\ell}
    \lt[
    \bbP
    [w\in U~]~|~x_t\in S\text{ and }E_{x_t}
    \rt]
    &\leq 
    \frac{\bbE^{x_t,\vec\ell}
    \lt[
    \bbP
    [w\in U~]\rt]}{\bbP[x_t\in S\text{ and }E_{x_t}]}
    \\
    &=
    \frac{\bbP[x_s\in U]}{\bbP[x_t\in S\text{ and }E_{x_t}]}
    \\
    &\leq 
    3\cdot 10^{-4}.
    \end{align*}
    Given $x_t$ and $\vec\ell$, we also let $\wt w\sim \omega_{x_t,\vec\ell,s}$ for $\omega_{x_t,\vec\ell,s}$ as in Lemma~\ref{lem:short-steps-distribution-v2}.
    Then the $0.03$-covering guarantee in Lemma~\ref{lem:short-steps-distribution-v2} directly implies
    \begin{align*}
    &\bbE^{x_t,\vec\ell}
    \lt[
    \bbP
    [w\in U~]~|~x_t\in S\text{ and }E_{x_t}
    \rt]
    \\
    &\geq 
    0.03
    \cdot 
    \bbE^{x_t,\vec\ell}
    \lt[
    \bbP
    [\wt w\in U~]~|~x_t\in S\text{ and }E_{x_t}
    \rt]
    \\
    &=
    0.03\cdot \overline{\omega}_s
    .
    \end{align*}
    Combining completes the proof.
\end{proof}

\begin{lemma}
It holds that
\label{lem:periodic-unbiased}
    \[
    \lt|\frac{1}{\lfloor n^{5}/\beta \rfloor}
    \sum_{i=1}^{\lfloor n^{5}/\beta \rfloor}
    \overline{\omega}_{(a+i)\beta}
    -\frac{1}{2}\rt|
    \leq 
    0.02.
    \]
\end{lemma}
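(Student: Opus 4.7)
The plan is to deduce Lemma~\ref{lem:periodic-unbiased} directly from Lemma~\ref{lem:omega-biased} and Lemma~\ref{lem:IVT} by the triangle inequality; no new structural work is required at this step. The key observation is that Lemma~\ref{lem:omega-biased} gives a pointwise bound $|\overline{\omega}_s - F(s)| \leq 0.01$ uniformly for $T \leq s \leq T + n^5$, since it handles both the case $F(s)=1$ (where $\overline{\omega}_s \geq 0.99$) and $F(s)=0$ (where $\overline{\omega}_s \leq 0.01$).

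First, I would observe that the indices $(a+i)\beta$ for $i \in [1, \lfloor n^5/\beta\rfloor]$ all lie in the interval $[T, T+n^5]$ (where $T = a\beta$), so Lemma~\ref{lem:omega-biased} applies at each of these times. Averaging the pointwise bound over $i$ gives
\[
\left|\frac{1}{\lfloor n^{5}/\beta \rfloor}\sum_{i=1}^{\lfloor n^{5}/\beta \rfloor}\overline{\omega}_{(a+i)\beta} - \frac{1}{\lfloor n^{5}/\beta \rfloor}\sum_{i=1}^{\lfloor n^{5}/\beta \rfloor} F((a+i)\beta)\right| \leq 0.01.
\]
The second sum is precisely $\overline{F}_a$, so the left-hand quantity is within $0.01$ of $\overline{F}_a$.

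Next, Lemma~\ref{lem:IVT} guarantees $|\overline{F}_a - 1/2| \leq 0.01$. A final triangle inequality yields
\[
\left|\frac{1}{\lfloor n^{5}/\beta \rfloor}\sum_{i=1}^{\lfloor n^{5}/\beta \rfloor}\overline{\omega}_{(a+i)\beta} - \frac{1}{2}\right| \leq 0.01 + 0.01 = 0.02,
\]
which is the desired bound.

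Since the argument is essentially a two-step triangle inequality, there is no serious obstacle; the only thing to double-check is that the indices $(a+i)\beta$ stay within the range on which Lemma~\ref{lem:short-steps-distribution-v2} (and hence $\overline{\omega}_s$ via Lemma~\ref{lem:omega-biased}) is defined, and this follows since $T = a\beta > n^{33}$ and $\lfloor n^5/\beta\rfloor \cdot \beta \leq n^5$.
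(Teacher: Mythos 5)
Your proof is correct and takes essentially the same approach as the paper: apply Lemma~\ref{lem:omega-biased} to get the pointwise bound $|\overline{\omega}_s - F(s)| \le 0.01$, average, and combine with Lemma~\ref{lem:IVT} via the triangle inequality. The only difference is that you spell out the intermediate index-range check, which the paper leaves implicit.
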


\begin{proof}
    By Lemma~\ref{lem:omega-biased}, 
    $|\overline{\omega}_{s}-F(s)|\leq 0.01$ for each $s$. Hence
    \[
    \lt|\frac{1}{\lfloor n^{5}/\beta \rfloor}
    \sum_{i=1}^{\lfloor n^{5}/\beta \rfloor}
    \overline{\omega}_{(a+i)\beta}
    -\frac{1}{2}\rt|
    \leq 
    |\overline{F}_a-1/2|+0.01\leq 0.02.
    \]
\end{proof}

We will use the following lemma which is proved in Section~\ref{sec:fourier-proof}.

\begin{lemma}
\label{lem:fourier}
The following holds with $L=n^3$. Fix $n$ and let $S=(S_1,S_2,\dots,S_{L})\in [-1,1]^L$ be a random sequence which is $m$-periodic for some random $m=m(S)\leq 10n$. 
Suppose that 
\begin{equation}
\label{eq:low-bias-assumption}
    -L/10\leq \bbE\sum_{d=1}^L S_d\leq L/10
\end{equation}
and that for each prime $p$,
\[
    \bbP[p\text{ divides }m]\leq 0.55.
\]
Then there exists an index $1\leq d \leq L$ such that 
\[
    |\bbE[S_d]|\leq 0.9.
\]
\end{lemma}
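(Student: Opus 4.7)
The plan is to proceed by contradiction: assume $|\bbE[S_d]| > 0.9$ for every $1 \leq d \leq L$ and then violate the bias hypothesis. Set $g(d) := \bbE[S_d]$, $\bar g := L^{-1}\sum_d g(d)$, and $G(d) := g(d) - \bar g$. The bias hypothesis gives $|\bar g| \leq 1/10$, so $|G(d)| \geq 0.9 - 0.1 = 0.8$ and hence $\bbE_d[G(d)^2] \geq 0.64$. The goal becomes to establish the reverse inequality $\bbE_d[G(d)^2] < 0.64$ using only the prime-divisibility hypothesis on $m$.

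Write $f_m(d) := \bbE[S_d \mid m]$ (which is $m$-periodic in $d$) and $h_m := f_m - \bar f_m$, so $G(d) = \bbE_m[h_m(d)]$. Taking two independent copies $m_1, m_2$ of the period,
\[
\bbE_d[G(d)^2] = \bbE_{m_1, m_2}\bbE_d[h_{m_1}(d) h_{m_2}(d)].
\]
For fixed $(m_1, m_2)$ the product is periodic of period $\mathrm{lcm}(m_1, m_2) \leq 100 n^2$, so the average over $d \in [1, L]$ agrees with the full-period average up to an error of $O(n^2/L) = O(1/n)$. The Chinese remainder theorem identifies this full-period average with $\bbE_s[\tilde h_{m_1}(s) \tilde h_{m_2}(s)]$, where $s \in \bbZ/\gcd(m_1, m_2)$ and $\tilde h_{m_i}$ denotes the $\gcd$-reduction of $h_{m_i}$ (the average over each residue class). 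Parseval on $\bbZ/\gcd(m_1, m_2)$, reorganized by reduced rational frequency $\alpha = a/b \in [0, 1)$ with $\gcd(a, b) = 1$, produces the clean identity
\[
\bbE_d[G(d)^2] = \sum_{\alpha \neq 0} |D(\alpha)|^2 + O(1/n),
\]
where $D(\alpha) := \bbE_m[\tilde C_m(\alpha)]$, $\tilde C_m(\alpha) := \hat f_m(am/b)$ if $b \mid m$ and $0$ otherwise, and $\hat f_m(k) := \tfrac{1}{m}\sum_r f_m(r) e^{-2\pi i k r/m}$. The key technical point is that $\alpha \mapsto am/b$ is a bijection between reduced $\alpha$ with $b \mid m$ and $\{0, 1, \dots, m-1\}$, so Parseval on $\bbZ/m$ translates cleanly between the two parametrizations.

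The prime-divisibility hypothesis now enters in a one-line Cauchy--Schwarz. For each reduced $\alpha \neq 0$, using that $\tilde C_m(\alpha)$ already contains the indicator $\one[b\mid m]$,
\[
|D(\alpha)|^2 \leq \bbP[b \mid m] \cdot \bbE_m |\tilde C_m(\alpha)|^2.
\]
Since $b \geq 2$, any prime $p \mid b$ gives $\{b \mid m\} \subseteq \{p \mid m\}$, hence $\bbP[b \mid m] \leq 0.55$. Summing over nonzero reduced $\alpha$ and applying Parseval on each $\bbZ/m$,
\[
\sum_{\alpha \neq 0} |D(\alpha)|^2 \leq 0.55 \cdot \bbE_m \sum_k |\hat f_m(k)|^2 = 0.55 \cdot \bbE_m \bbE_r[f_m(r)^2] \leq 0.55.
\]
Combined with the $O(1/n)$ error, $\bbE_d[G(d)^2] \leq 0.55 + O(1/n) < 0.64$ for $n$ sufficiently large, the desired contradiction.

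The principal obstacle is the Fourier bookkeeping in the middle paragraph: verifying that after CRT reduction, Parseval on $\bbZ/\gcd(m_1, m_2)$ yields exactly a diagonal sum $\sum_\alpha \tilde C_{m_1}(\alpha) \overline{\tilde C_{m_2}(\alpha)}$ in \emph{reduced} frequencies, so that the independent expectation over $m_1, m_2$ factors as $\sum_\alpha |D(\alpha)|^2$. Once this identity is pinned down, the finite-window error is harmless because $L = n^3 \gg 100 n^2$, and the prime-divisibility hypothesis is invoked only in the final one-line Cauchy--Schwarz.
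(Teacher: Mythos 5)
Your proof is correct and rests on the same core Fourier-analytic observation as the paper's: for $m$- and $m'$-periodic sequences the inner product $\la \wh S,\wh S'\ra$ is supported on $Z_{\gcd(m,m')}$, so the prime-divisibility hypothesis forces the non-constant Fourier mass to be small. The organization differs in two small but genuine ways. First, you center by subtracting the window mean $\bar g$ at the outset and aim for a contradiction with $\frac{1}{L}\sum_d G(d)^2 > 0.64$, whereas the paper keeps the zero-frequency term $\wh S(0)\wh S'(0)$ in play and disposes of it separately via \eqref{eq:bias-periodic-approximation} and \eqref{eq:low-bias-assumption}; the two are equivalent up to the $O(n^2/L)$ window error you track. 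Second, for the nonzero frequencies the paper applies Cauchy--Schwarz globally over $Z_g^*$ and bounds one factor by $1$ (Proposition~\ref{prop:gcd-fourier-bound}), then uses Jensen (Lemma~\ref{lem:small-non-biased-terms}) to land on $\sqrt{0.55}$; you instead exploit the independence of $m_1,m_2$ to factor the bilinear sum as $\sum_{\alpha\neq 0}|D(\alpha)|^2$ and apply Cauchy--Schwarz \emph{per frequency}, getting $0.55$ outright. Either constant suffices. The CRT/reduced-frequency bookkeeping you flag as the delicate step checks out: for each $m$ the map $k\mapsto k/m$ (written in lowest terms $a/b$) is a bijection between nonzero residues mod $m$ and reduced rationals in $(0,1)$ with $b\mid m$, and since $\gcd(m_1,m_2)$ divides $m_i$ one has $\wh{\tilde h}_{m_i}(k) = \wh{h}_{m_i}(k m_i/\gcd(m_1,m_2))$, so Parseval on $\bbZ/\gcd(m_1,m_2)$ does reorganize as $\sum_{\alpha\neq 0}\tilde C_{m_1}(\alpha)\overline{\tilde C_{m_2}(\alpha)}$ as you claim.
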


Combining the above, we can finally prove Theorem~\ref{thm:main2}.

\begin{proof}[\textbf{Proof of Theorem~\ref{thm:main2}}:]
Given a putative $(\cM_n,U)$, we apply Lemma~\ref{lem:fourier} with 
\[
    S=S_{x_d,\vec\ell}=\bbP^{v\sim \omega_{x_d,\ell,s}}
    [v\in U].
\]
where we map $[0,1]\to [-1,1]$ via $a\mapsto 2a-1$.
The first condition holds by Lemma~\ref{lem:periodic-unbiased} since $\bbE[S_d]=\overline{\omega}_{(a+d)\beta}$.
The second condition holds by \eqref{eq:few-primes-p}. 
Hence Lemma~\ref{lem:fourier} implies that there exists $s$ such that $\overline{\omega}_s \in [0.02,0.98]$. This contradicts Lemma~\ref{lem:omega-biased} above, giving a contradiction and completing the proof. 
\end{proof}

\section{Proof of Lemma~\ref{lem:fourier}}
\label{sec:fourier-proof}

Here we prove Lemma~\ref{lem:fourier} using Fourier analysis. 
For each $m\geq 1$, let 
\[
Z_m=\{0,1/m,\dots,(m-1)/m\}
\]
and set $Z_m^*=Z_m\backslash \{0\}$.
Given a $m$-periodic sequence $S=(S_1,S_2,\dots)$ let $\wh S:[0,1)\to \bbC$
be its Fourier transform
\begin{equation}
\label{eq:fourier-def}
    \wh S(\alpha)
    =
    \begin{cases}
    \frac{1}{m}
    \sum_{j=1}^{m}
    e^{2\pi i j\alpha} S_j,\quad\alpha\in Z_m,
    \\
    0,\quad\quad
    \quad\quad\quad\quad\quad\alpha\notin Z_m\,.
    \end{cases}
\end{equation}
The definition \eqref{eq:fourier-def} agrees with the Fourier transform of $S$ viewed as a periodic function on $\bbZ$ and hence is independent of the period $m$ --- we could view $S$ as being $km$ periodic for any $k\geq 1$ and $\wh S$ would remain consistent. This allows us to use Fourier analysis on pairs $S,S'$ of sequences with different period lengths $m,m'$. 
Indeed with $M=mm'$ we can define:
\begin{align}
\label{eq:inner-product}
    \la S,S'\ra
    &=
    \frac{1}{M}
    \sum_{i=1}^{M}
    S_i S'_i
    \\
\label{eq:primal-norm}
    \|S\|_{L^2}^2
    &=
    \frac{1}{M}
    \sum_{i=1}^{M}
    S_i^2
    \\
\label{eq:fourier-inner-product}
    \la \wh S,\wh S'\ra
    &=
    \sum_{\alpha\in Z_M}
    \wh S(\alpha) \wh S'(\alpha)
    \\
\label{eq:dual-norm}
    \|\wh S\|_{\ell^2}^2
    &=
    \sum_{\alpha\in Z_m}
    |\wh S(\alpha)|^2
    .
\end{align}
Parseval identity's modulo $M$ implies that $\la S,S'\ra=\la \wh S,\wh S'\ra$ and $\|S\|_{L^2}=\|\wh S\|_{L^2}$.

The next proposition approximates averages on $1,2,\dots,L$ by Fourier averages.
Here and below we let $\mu$ be the law of the random sequence $S$, and recall that $S\sim\mu$ is $m(S)$ periodic for $m\leq 10n$ almost surely.

\begin{proposition}
\label{prop:periodic-approximation}
    Then
    \begin{align}
    \label{eq:inner-product-periodic-approximation}
    \lt|
    \frac{1}{L}
    \sum_{d=1}^L
    \bbP^{S\sim \mu}[S_d=1]^2
    -
    \bbE^{S,S'\sim\mu}
    \la \wh S,\wh S'\ra
    \rt|
    &\leq 
    O(n^2/L),
    \\
    \label{eq:bias-periodic-approximation}
    \lt|
    \lt(
    \frac{1}{L}
    \bbE^{S\sim\mu}
    \sum_{d=1}^L
    S_d\rt)^2
    -
    \bbE^{S,S'\sim\mu}
    [\wh S(0)\wh S'(0)]
    \rt|
    &\leq 
    O(n^2/L).
    \end{align}
\end{proposition}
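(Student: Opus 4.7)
The plan is to handle both estimates via a common two-step strategy: first use Parseval to rewrite each Fourier-side quantity as a primal average over one full period, then compare that to the stated $L$-term average using the elementary fact that for any bounded $M$-periodic sequence $(T_d)$ with $|T_d| \le 1$,
\[
\lt|\fr{1}{L}\sum_{d=1}^{L} T_d - \fr{1}{M}\sum_{i=1}^{M} T_i\rt| \le \fr{2M}{L}.
\]
Since $m(S), m(S') \le 10n$ with probability one, every common period $M = m(S)m(S')$ arising below is at most $100n^2$, so the resulting discrepancy will uniformly be $O(n^2/L)$.

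For \eqref{eq:inner-product-periodic-approximation} (whose primal side I read as $\bbE^{S\sim\mu}[S_d]^2$, which is the only quantity naturally matched by the Fourier side), I would first invoke the Parseval identity $\la S, S'\ra = \la \wh S, \wh{S'}\ra$ stated in the paper, viewing both sequences as $M$-periodic for $M = m(S)m(S')$, to obtain
\[
\bbE^{S,S'\sim\mu}\la \wh S, \wh{S'}\ra = \bbE^{S,S'\sim\mu}\fr{1}{M}\sum_{i=1}^{M} S_i S'_i.
\]
On the primal side, independence of $S$ and $S'$ gives $\bbE[S_d]^2 = \bbE[S_d\,S'_d]$ for each fixed $d$, so $\fr{1}{L}\sum_d \bbE[S_d]^2 = \bbE^{S,S'}\fr{1}{L}\sum_{d=1}^{L} S_d S'_d$. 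The pointwise product $(S_d S'_d)$ is $M$-periodic and bounded by $1$ in absolute value, so applying the periodic-averaging fact inside the expectation yields the first bound.

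For \eqref{eq:bias-periodic-approximation}, observe that $\wh S(0) = \fr{1}{m}\sum_{j=1}^{m} S_j$ is by definition the full-period average of $S$. Independence gives $\bbE^{S,S'}[\wh S(0)\wh{S'}(0)] = (\bbE^{S}\wh S(0))^2$, and the periodic-averaging inequality with $M$ replaced by $m$ yields $|\wh S(0) - \fr{1}{L}\sum_{d=1}^{L} S_d| \le 2m/L \le O(n/L)$, hence $|\bbE\wh S(0) - \fr{1}{L}\bbE\sum_d S_d| \le O(n/L)$. Squaring via $|a^2 - b^2| \le 2|a-b|$ for $|a|, |b| \le 1$ produces the desired $O(n/L) \le O(n^2/L)$ bound.

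The main obstacle is purely bookkeeping around the random periods $m, m'$, and is resolved by the uniform bound $m \le 10n$; the Parseval identity is applied to sequences of differing periods by extending both to the common period $M$ (their Fourier transforms vanish outside $Z_m$ and $Z_{m'}$ respectively, so the usual Parseval on $\bbZ/M\bbZ$ still collapses to the claimed identity). I do not anticipate any substantial difficulty beyond tracking constants.
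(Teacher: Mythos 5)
Your proposal is correct and follows essentially the same route as the paper: introduce an independent copy $S'$ so that $\bbE[S_d]^2 = \bbE[S_d S'_d]$, compare the $L$-term average to the full period-$M$ average (with $M=m(S)m(S')\leq 100n^2$) via the elementary periodic-averaging bound, then invoke Parseval; and similarly for the bias term using that $\wh S(0)$ is the full-period average followed by the Lipschitz-squaring step. You also correctly read the primal quantity in \eqref{eq:inner-product-periodic-approximation} as $\bbE^{S\sim\mu}[S_d]^2$ (the notation $\bbP^{S\sim\mu}[S_d=1]^2$ in the statement is a typo --- $S$ is $[-1,1]$-valued --- and the paper's own proof implicitly treats it this way), so there is no gap.
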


\begin{proof}
    Notice that by definition,
    \[
    \frac{1}{L}
    \sum_{d=1}^L
    \bbP^{S\sim \mu}[S_d=1]^2
    =
    \frac{1}{L}\bbE^{S,S'\sim\mu}
    \big\la (S_1\dots,S_L),(S'_1,\dots,S'_T)\big\ra
    \]
    where $\la (S_1\dots,S_L),(S'_1,\dots,S'_T)\ra=\frac{1}{L}\sum_{d=1}^L S_d S'_d$.
    Since $M=mm'\leq 100n^2$,
    \[
    \lt|
     \frac{1}{L}\bbE^{S,S'\sim\mu}
    \big\la (S_1\dots,S_L),(S'_1,\dots,S'_T)\big\ra
    -
     \la S,S'\ra
    \rt|
    \leq 
    O(n^2/L).
    \]
    Combining and using $\la S,S'\ra=\la \wh S,\wh S'\ra$ by Parseval completes the proof of \eqref{eq:inner-product-periodic-approximation}. The proof of \eqref{eq:bias-periodic-approximation} is similar since 
    \[
    \lt|
    \frac{1}{L}
    \bbE^{S\sim\mu}
    \sum_{d=1}^L S_d
    -
    \bbE^{S\sim\mu}
    [\hat S(0)]
    \rt|
    =
    \lt|
    \frac{1}{L}\bbE^{S\sim\mu}
    \sum_{d=1}^L S_d
    -
    \frac{1}{m(S)}\bbE^{S\sim\mu}
    \sum_{i=1}^{m(S)} S_i
    \rt|
    \leq 
    O(n/L).
    \]
    Indeed $(\bbE^{S\sim\mu}[\wh S(0)])^2=\bbE^{S,S'\sim\mu}[\wh S(0)\wh S'(0)]$, and the function $x\mapsto x^2$ is Lipschitz on $x\in [-1,1]$.
\end{proof}

The main idea to prove Lemma~\ref{lem:fourier} is that $S,S'$ are unlikely to have similar periods. In carrying out this argument we have to handle the bias $\wh S(0)$ separately from the $\alpha\in (0,2\pi)$ contributions. 
The next proposition gives an estimate to handle the latter contributions; note it is important that the last term below mixes $S,S'$.

\begin{proposition}
\label{prop:gcd-fourier-bound}
    For $m$ and $m'$-periodic functions $S,S':\bbZ\to [-1,1]$, 
    \[
    \la \wh S,\wh S'\ra 
    \leq 
    \wh S(0)\wh S'(0)
    +
    \|\, \wh S|_{Z_{m'}^*}\|_{\ell^2}
    \]
    where $\wh S|_{Z_{m'}^*}$ denotes the restriction of $\wh S$ to $Z_{m'}^*$.
\end{proposition}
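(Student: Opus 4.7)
The plan is to identify the common Fourier support of $\wh S$ and $\wh{S'}$, peel off the zero-frequency term, and control the remainder by Cauchy--Schwarz combined with Parseval's identity and the hypothesis $|S'|\leq 1$.

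First, from the definition~\eqref{eq:fourier-def}, $\wh S$ is supported on $Z_m$ and $\wh{S'}$ on $Z_{m'}$, so the only nonzero contributions to $\la\wh S,\wh{S'}\ra=\sum_{\alpha\in Z_M}\wh S(\alpha)\wh{S'}(\alpha)$ come from $\alpha\in Z_m\cap Z_{m'}$. A short number-theoretic check (writing $m=d(m/d)$, $m'=d(m'/d)$ with $d=\gcd(m,m')$ and $\gcd(m/d,m'/d)=1$) gives $Z_m\cap Z_{m'}=Z_{\gcd(m,m')}$, which in particular is contained in $Z_{m'}$. Separating the $\alpha=0$ contribution yields
\[
\la\wh S,\wh{S'}\ra
=\wh S(0)\wh{S'}(0)+\sum_{\alpha\in(Z_m\cap Z_{m'})^*}\wh S(\alpha)\wh{S'}(\alpha).
\]

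For the remaining sum I would apply Cauchy--Schwarz on the finite index set $(Z_m\cap Z_{m'})^*$. Since $(Z_m\cap Z_{m'})^*\subseteq Z_{m'}^*$, the $\wh S$ factor is bounded above by $\|\wh S|_{Z_{m'}^*}\|_{\ell^2}$, while the $\wh{S'}$ factor is bounded by the full $\ell^2$ norm $\|\wh{S'}\|_{\ell^2}$. Parseval (as stated immediately after \eqref{eq:dual-norm}) then gives $\|\wh{S'}\|_{\ell^2}=\|S'\|_{L^2}\leq 1$ using the hypothesis $|S'|\leq 1$. Combining these two estimates with the display above delivers the claim.

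The argument is essentially mechanical; there is no genuine obstacle. The minor bookkeeping points to verify are (i) the identification $Z_m\cap Z_{m'}=Z_{\gcd(m,m')}\subseteq Z_{m'}$ and (ii) the observation that $\la\wh S,\wh{S'}\ra$ is real-valued (it equals $\la S,S'\ra$, which is a real number because $S,S'$ are real-valued sequences; alternatively one sees that $Z_d^*$ is closed under negation and $\wh S(-\alpha)=\overline{\wh S(\alpha)}$ pairs the summands into conjugate pairs), so that the complex sum in the display can be bounded by its modulus and Cauchy--Schwarz applies cleanly.
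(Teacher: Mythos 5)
Your proof is correct and follows essentially the same route as the paper: restrict attention to the common Fourier support $Z_{\gcd(m,m')}\subseteq Z_{m'}$, peel off the $\alpha=0$ term, and bound the nonzero-frequency contribution by Cauchy--Schwarz, then use Parseval and $|S'|\le1$ to absorb the $\wh{S'}$ factor. Your extra remark that the sum is real-valued (via conjugate-pairing of $\pm\alpha$) is a sensible bit of care that the paper's proof passes over implicitly, but it does not change the structure of the argument.
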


\begin{proof}
    The summand $\wh S(\alpha)\wh S'(\alpha)$ in \eqref{eq:fourier-inner-product} is non-zero only when $\alpha\in Z_g$ for $g=\gcd(m,m')$.
    This yields an upper bound of
    \[
    \wh S(0)\wh S'(0)
    +
    \|\,\wh S|_{Z_g^*}\|_{\ell^2}
    \cdot
    \|\,\wh S'|_{Z_g^*}\|_{\ell^2}.
    \]
    This immediately implies the claim since 
    \[
    \|\,\wh S|_{Z_g^*}\|_{\ell^2}
    \leq 
    \|\, \wh S|_{Z_{m'}^*}\|_{\ell^2}
    \]
    (in fact, equality holds)
    while 
    \[
    \|\,\wh S'|_{Z_g^*}\|_{\ell^2} 
    \leq 
    \|\,\wh S'|_{Z_M}\|_{\ell^2} 
    =
    \|\,S'\|_{L^2} 
    \leq 
    1.
    \]
\end{proof}

\begin{lemma}
\label{lem:small-non-biased-terms}
    Suppose $S,S'\stackrel{\iid}{\sim}\mu$ and $\bbP^{S\sim\mu}[p\text{ divides }m(S)]\leq \eps$ for any prime $p$. Then 
    \[
    \bbE^{S\sim\mu}
    \big[
    \|\, \wh S|_{Z_{m'}^*}\|_{\ell^2}
    \big]
    \leq 
    \sqrt{\eps}
    \]
\end{lemma}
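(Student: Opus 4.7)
My plan is to interpret the expectation as being over the iid pair $(S,S')$, with $m' = m(S')$, and reduce to the squared bound
\[
    \bbE^{S,S'\stackrel{\iid}{\sim}\mu}\bigl[\|\wh S|_{Z_{m'}^*}\|_{\ell^2}^2\bigr] \leq \eps
\]
via Jensen's inequality; the stated bound then follows by taking square roots.

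The structural input is that $\wh S$ is supported on $Z_{m(S)}$, so any frequency $\alpha$ contributing to $\wh S|_{Z_{m'}^*}$ must lie in $Z_{m(S)} \cap Z_{m'} \setminus \{0\}$. This forces its additive order $d = \text{ord}(\alpha)$ in $\bbR/\bbZ$ to satisfy $d \geq 2$ and to divide both $m(S)$ and $m'$. I would therefore group frequencies by their order, and then use the independence of $S$ and $S'$ to factor the indicator $\one[d \mid m(S')]$ out of the expectation, obtaining
\[
    \bbE^{S,S'}\|\wh S|_{Z_{m'}^*}\|_{\ell^2}^2 = \sum_{d \geq 2} \bbP[d\mid m(S')] \cdot \bbE^{S}\biggl[\one[d\mid m(S)] \sum_{\text{ord}(\alpha) = d} |\wh S(\alpha)|^2\biggr].
\]

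The key step is the uniform bound $\bbP[d\mid m(S')] \leq \eps$ for every $d \geq 2$, which follows from the hypothesis applied to any prime $p$ dividing $d$. Pulling this $\eps$ outside the sum collapses the remaining sum over $d$ back into an unrestricted $\ell^2$ norm:
\[
    \eps \cdot \bbE^{S} \sum_{d \geq 2,\ d\mid m(S)} \sum_{\text{ord}(\alpha) = d} |\wh S(\alpha)|^2 = \eps \cdot \bbE^{S} \|\wh S|_{Z_{m(S)}^*}\|_{\ell^2}^2 \leq \eps \cdot \bbE^{S} \|S\|_{L^2}^2 \leq \eps,
\]
where the last two inequalities use Parseval (as stated in the paper) together with $|S_i| \leq 1$.

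I do not anticipate a serious obstacle here. The only care required is twofold: first, to be clean about which expectations are over $S$ versus the joint $(S,S')$ (in particular, recognizing that the statement's $\bbE^{S\sim\mu}$ must be read as joint once one sees that $m'$ depends on $S'$); and second, to transfer the primality hypothesis to arbitrary $d \geq 2$ by passing to the smallest prime factor, which is a one-line observation rather than a genuine difficulty.
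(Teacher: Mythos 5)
Your proof is correct and is essentially the same argument the paper uses: reduce to the second moment via Jensen, observe that any fixed $\alpha\neq 0$ lands in $Z_{m'}^*$ with probability at most $\eps$ (via a prime dividing its order), and apply linearity of expectation together with Parseval. You have simply spelled out in more detail (grouping by order $d$, the observation that it suffices to pass to a prime factor of $d$, and the reading of $\bbE^{S\sim\mu}$ as a joint expectation over $(S,S')$) what the paper's proof states tersely.
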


\begin{proof}
    Directly by the assumption, for any fixed $\alpha\neq 0$, we have $\bbP[\alpha\in Z_{m'}^*]\leq \eps$.    
    Thus linearity of expectation implies
    \[
    \bbE^{S\sim\mu}
    \big[
    \|\, \wh S|_{Z_{m'}^*}\|_{\ell^2}^2
    \big]
    \leq 
    \eps.
    \]
    Applying Jensen's inequality completes the proof.
\end{proof}

\begin{proof}[\textbf{Proof of Lemma~\ref{lem:fourier}}:] 
    By applying Markov's inequality to \eqref{eq:inner-product-periodic-approximation} and noting $n^2/L\leq 1/n$ it suffices to show that 
    \begin{equation}
    \label{eq:fourier-finish}
    \bbE^{S,S'\sim\mu}
    \la \wh S,\wh S'\ra
    \stackrel{?}{\leq} 0.8 < 0.9^2.
    \end{equation}
    We first deal with the bias term of the left-hand side.
    Combining \eqref{eq:bias-periodic-approximation} and \eqref{eq:low-bias-assumption} yields
    \begin{align*}
    \bbE^{S,S'\sim\mu}[\wh S(0)\wh S'(0)]\leq 0.02.
    \end{align*}
    For the remaining terms $\alpha\neq 0$, combining Proposition~\ref{prop:gcd-fourier-bound} with Lemma~\ref{lem:small-non-biased-terms} (which holds with $\eps=0.55$ by assumption) yields
    \[
    \bbE^{S,S'\sim\mu}
    \sum_{\alpha\in Z_M^*}
    \wh S(\alpha)\wh S'(\alpha)
    \leq \sqrt{0.55}\leq 0.75.
    \]
    Summing the contributions establishes \eqref{eq:fourier-finish} and thus finishes the proof.
\end{proof}

\section*{Acknowledgements}

Thanks to Yang Liu for helpful discussions, and to Rachel Zhang, Naren Manoj, and Themistoklis Haris for comments on the manuscript.
Ofer Grossman was supported by the NSF GRFP as a graduate student at MIT, and was visiting the Simons Institute for the Theory of Computing while part of this work was completed. Meghal Gupta was supported by a Chancellor's Fellowship as a graduate student at UC Berkeley.

\small
\bibliographystyle{alpha}
\bibliography{bib}

\end{document}